\documentclass[11pt]{article}
\usepackage[T1]{fontenc}
\usepackage[UKenglish]{isodate} 
\usepackage{comment,xfrac}
\usepackage{import}
\usepackage{url}

\usepackage{xcolor}
\usepackage{pdflscape}
\usepackage{placeins}
\usepackage{subcaption}
\usepackage{bbm}
\usepackage{listings}
\usepackage{color}
\usepackage{amssymb}
\usepackage{graphicx}
\usepackage{tikz}
\usetikzlibrary{shapes, shapes.arrows}
\usepackage{amsmath}
\usepackage{lipsum}
\usepackage[a-1b]{pdfx}

\usepackage{hyperref}  
			
\hypersetup{colorlinks, citecolor=black, filecolor=black, linkcolor=black, urlcolor=black}
\usepackage{bm} 

\usepackage{glossaries} 
\loadglsentries{gls.tex}

\usepackage{amsthm}
\newtheorem{lemma}{Lemma}
\newtheorem{theorem}{Theorem}
\newtheorem{cor}{Corollary}
\newtheorem{prop}{Proposition}
\theoremstyle{definition}

\theoremstyle{remark}
\newtheorem*{remark}{Remark}

\newcommand{\matr}[1]{\bm{#1}}
\renewcommand{\vec}[1]{\boldsymbol{#1}}

\newcommand{\prob}{{\mathbb P}}
\newcommand{\E}{{\mathbb E}}
\newcommand{\ER}{{Erd\H{o}s-R\'enyi }}

\newcommand{\nats}{{\mathbb N}}

\newcommand{\kl}{D}
\newcommand{\gf}{{\mathbb{F}}}
\newcommand{\boldx}{\boldsymbol{x}}
\newcommand{\boldzero}{\boldsymbol{0}}


\author{Mark~A.~Graham\thanks{Roke Manor Research. Research supported by EPSRC grant EP/I028153/1, and Roke Manor Research.}, Ayalvadi~J.~~Ganesh\thanks{School of Mathematics, University of Bristol, Bristol, UK.}, Robert~J.~Piechocki\thanks{Dept. of Electronic Engg., University of Bristol, Bristol, UK.}}

\title{Low latency allcast over broadcast erasure channels}

\begin{document}

\maketitle

\begin{abstract}
Consider $n$ nodes communicating over an unreliable broadcast channel. Each node has a single packet that needs to be communicated to all other nodes. Time is slotted, and a time slot is long enough for each node to broadcast one packet. Each broadcast reaches a random subset of nodes. The objective is to minimise the time until all nodes have received all packets. We study two schemes, (i) random relaying, and (ii) random linear network coding, and analyse their performance in an asymptotic regime in which $n$ tends to infinity. Simulation results for a wide range of $n$ are also presented.
\end{abstract}
\section{Introduction}
The problem of multicasting a message from a single source to multiple receivers over wired networks such as the Internet has been widely studied. Content distribution networks and live streaming were some of the major motivations. Early work sought centralised algorithms for building optimal multicast trees according to specified performance measures; see, e.g., ~\cite{waxman, kompella93}. This often gave rise to NP-hard optimisation problems, typically variants of the Steiner tree problem, and much work was devoted to developing approximation algorithms for these problems. Subsequently, interest shifted to distributed algorithms for constructing multicast trees on overlay networks~\cite{castro03, lee06}. These works mainly focused on throughput and on reliability under erasures, e.g., due to packet losses caused by congestion. A lightweight gossip-style scheme for content dissemination in peer-to-peer networks was presented in~\cite{sanghavi}. A number of works proposed network coding as a mechanism to maximise throughput~\cite{zli05, deb06, ho06}. The delay performance of network coding was analysed in~\cite{medard06, medard08}.

The increasing prevalence of wireless communication at the network edge motivates interest in broadcast channels as opposed to point-to-point links. Live streaming of a single source in dense wireless networks has been studied in~\cite{costa14}. A motivating application for the work in this paper comes from autonomous vehicles, which need to exchange data on velocity, acceleration and manoeuvring intentions reliably and with very low latency. The work presented here is also relevant to flooding of routing information in wireless ad-hoc networks, and to cyber-physical systems composed of large numbers of sensors and control devices communicating over wireless channels. Common features of these applications are short messages, originating from many nodes in the network, and needing to be communicated to many other nodes with very low latency. Moreover, communication links are unreliable and one-hop communication is not possible between all pairs of nodes. These are the features addressed in the work presented here. In addition, as low latency is achieved by minimising the number of transmissions, it has the side-effect of reducing energy use. Hence, the algorithms presented here are also relevant to applications in which energy is the binding constraint~\cite{fragouli08}. 

The remainder of this section places our contribution within the context of related work. The system model and problem statement are presented in Section~\ref{sec:model}, followed by a lower bound on the best achievable delay in Section~\ref{sec:lower_bd}. Two specific algorithms, based on random forwarding and random linear network coding respectively, are presented in Sections~\ref{sec:fwd} and~\ref{sec:coding}, along with an analysis of their performance. We present simulation results in Section~\ref{sec:sim}, and conclude in Section~\ref{sec:concl}.

\subsection{Related Work}

Consider a population of $n$ agents represented by the nodes of a directed communication graph $G=(V,E)$. Time is discrete and a time slot is long enough for a node to transmit one message of a fixed size, known as a packet. A transmission by node $u$ in time slot $t$ is received correctly by $v$ if $(u,v)\in E$ and no other node $w$ such that $(w,v)\in E$ transmits in the same time slot. If two or more such in-neighbours of $v$ transmit in the same time slot, then their packets collide and are erased at $v$ (but may be received correctly at other nodes). There are no errors or erasures other than due to collisions. 

Two problems have been studied on this system model. In the \emph{broadcasting} problem, a single node $v$ has a single packet which needs to be communicated to all other nodes. In the \emph{gossiping} or \emph{allcast} problem, each node has a single packet, and all nodes need to receive all packets. In each time slot, nodes may retransmit any packet they have previously received, but may not alter the contents of a packet. The objective is to construct a schedule of minimal length which ensures that all nodes receive the required packets. Both centralised and distributed versions of these problems have been studied, as well as versions in which the graph $G$ is known or unknown to the agents.

It was shown in~\cite{chlamtac85} that even the broadcasting problem is NP-hard. An approximation algorithm which achieves a cost within a logarithmic (in $n$) factor of the optimal was presented for the broadcasting problem in~\cite{baryehuda92}, and extended to the gossiping problem in~\cite{baryehuda93}. The approach taken in both these works is to construct breadth-first-search (BFS) trees, which incurs a one-time cost. Other approaches to the broadcasting and gossiping problems, which yield polylogarithmic approximations, are presented in~\cite{chlebus01, gaspot02, chrobak04}. The above works address general networks, which might be known or unknown to the agents. 

The special case of random geometric graphs, also known as the Gilbert disk or Boolean model, was studied in~\cite{ravishankar95, gupta99, huang08}. A constant factor approximation algorithm was obtained for such graphs in~\cite{huang08} using BFS trees. An asymptotically optimal algorithm for the one-dimensional case was presented in~\cite{ravishankar95}. Our paper complements these works by studying the special case of dense \ER random graphs. We show that random forwarding achieves a logarithmic approximation factor while respecting the constraint of only retransmitting received packets, while network coding achieves a constant factor approximation but violates this constraint.

The works cited above address collisions as well as the choice of which packet to transmit in each time slot in their scheduling algorithms. We abstract out collisions, assuming orthogonal channels. (Alternatively, collisions can be dealt with at a lower layer in the protocol stack, incurring an $O(\log n)$ penalty on average.) This is the same system model as considered in~\cite{firooz13}. They also use network coding, and obtain delay bounds for general networks. Specialising their results to our network model recovers a constant factor approximation guarantee, but with poorer (and less explicit) constants than we obtain. In addition, our proof techniques are very different, being based on concentration inequalities rather than combinatorial identities.

Finally, a complete graph with random edge capacities is studied in~\cite{swamy13}. The authors characterise the capacity region, and present push-pull algorithms (which do not employ network coding) for the gossiping problem which they show to be asymptotically optimal.

\section{System Model}\label{sec:model}
We consider a set of $n$ agents or nodes, each of whom has a single packet to transmit. All packets are of the same size. Time is discrete, and divided into slots or rounds. Each agent gets to broadcast a single packet in each round. Each broadcast is received error-free at a subset of agents, and erased completely at the others; no packets are received partially or with errors. We ignore issues related to contention, assuming that they have been dealt with at a lower layer. We assume that agents have unlimited storage and computational power. The algorithms that we consider require memory growing linearly with the number of agents, and computational power growing polynomially (at most as a cubic).

Let $G_t=(V,E_t)$ denote the directed graph whose vertices are the agents; if edge $(u,v)$ is in $E_t$, it denotes that the broadcast by agent $u$ in time slot $t$ is received by agent $v$. The objective is to minimise the time until all packets  reach all agents. Agents may cooperate to achieve this objective, but subject to the following restrictions. The size of the packet transmitted by an agent in any round must be a constant (i.e., not growing with $n$), and its content must only depend on the agent's original packet, and the packets it has received in previous rounds. 
We assume that agents are unaware of the graphs, $G_t$. In fact, the algorithms we propose only require packets to have identifiers; it is not necessary for agents to have identifiers, or to know who they can communicate with.
Finally, we assume that no agents are faulty or malicious, i.e., that they all implement any specified algorithm complying with the restrictions.
We now state two assumptions about the graph sequence, $G_t$, $t\in \nats$. 

\noindent{\bf Assumptions}
\begin{enumerate}
    \item[A1.]
    In each time slot $t$, the communication graph $G_t$
is a directed \ER random graph $G(n,p)$, for a fixed $p>0$. In other words, each directed edge is present with probability $p$, independent of all other edges.
    \item[A2.]
    The graph $G_t$ is fixed once and for all, i.e., $G_t=G_1$ for all $t\in \nats$.
\end{enumerate}

Assumption A1 says that all nodes face a similar environment, and hence have approximately the same in- and out-degrees as each other, concentrated sharply around the mean value of $np$. Furthermore, there is no spatial structure, in physical or latent space. The parameter $p\in (0,1]$ determines the edge density. The assumption is plausible when agents are in fairly close proximity, and connectivity is dominated by random fading rather than by attenuation. Examples include autonomous vehicles within a small neighbourhood or stretch of road, or sensors or IoT devices within a small building.

Assumption A2 is appropriate for environments with slow fading, compared to the time scale required for the allcast to complete; as our focus is on low-latency applications, the assumption seems realistic. In addition, intuition suggests, and the simulations reported in Section~\ref{sec:sim} confirm, that if the random graphs evolve over time, then that only reduces the dissemination time of the algorithms we study. As such, the fixed topology setting mandated by this assumption appears to be a worst-case scenario for the allcast problem.

\section{A lower bound} \label{sec:lower_bd}

In this section, we present an elementary lower bound on the number of rounds required by any allcast algorithm. We say that a property holds ``with high probability (whp)" if the probability that the property is satisfied tends to 1 as $n$ tends to infinity.

\begin{prop} \label{prop:allcast_lowerbd}
Let $G_t=(V,E_t)$, $t=1,2,3,\ldots$ be a sequence of directed communication graphs with common vertex set, $V$. Let $d^{\rm in}_t(v)$ denote the in-degree of node $v$ in $G_t$, and let 
$$
\tau_v = \inf \{ T: \sum_{t=1}^T d^{\rm in}_t(v) \geq n-1 \} .
$$
Then, for any allcast algorithm, the time until all nodes have all packets is at least as large as $\max_{v\in V} \tau_v$. 

In particular, if $G_t \equiv G$ for all $t\in \nats$ and $d^{\rm in}_{\min}$ is the minimum in-degree of any vertex in $G$, then allcast requires at least $(n-1)/d^{\rm in}_{\min}$ rounds, for any algorithm.
\end{prop}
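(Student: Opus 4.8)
The plan is to prove the first (general) statement by a counting/information-theoretic argument, and then derive the second statement as an immediate corollary.

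\medskip

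\noindent\textbf{Approach to the general bound.} Fix a node $v$ and consider any round $T < \tau_v$. By definition of $\tau_v$, we have $\sum_{t=1}^{T} d^{\rm in}_t(v) \le n-2$. The key observation is that the only information node $v$ ever receives comes in along its in-edges: in round $t$, the set of packets (or, for coding schemes, coded symbols) that $v$ can possibly learn is determined entirely by what its $d^{\rm in}_t(v)$ in-neighbours transmit in that round. Since each transmitted packet has constant size (in particular, over the first $T$ rounds the total number of bits arriving at $v$ is at most a constant times $\sum_{t=1}^{T} d^{\rm in}_t(v)$), and since $v$ must ultimately recover all $n-1$ foreign packets (each an arbitrary, a priori unknown string of that same constant size), a crude bit-counting argument shows that $v$ cannot have enough information after $T \le \tau_v - 1$ rounds. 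More carefully, one argues combinatorially rather than through bit-counts: one exhibits two distinct global packet assignments that induce identical transcripts at $v$ through round $T$, so $v$ cannot distinguish them and hence cannot have correctly decoded all packets under both. Taking the maximum over $v$ gives that no algorithm can finish before $\max_{v} \tau_v$.

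\medskip

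\noindent\textbf{Key steps, in order.} First, I would formalise the ``transcript at $v$'' as the ordered list of messages received on $v$'s in-edges through round $T$, and observe (by induction on $t$) that the state of \emph{every} node at the start of round $t$ — and hence every message sent in round $t$ — is a deterministic function of the global packet assignment and the graph sequence, so the whole system is deterministic once the algorithm is fixed. Second, I would fix $T = \tau_v - 1$ and count: the number of distinct transcripts $v$ can receive is at most the number of distinct messages times the number of in-edges, and since at most $n-2$ in-edges have delivered anything to $v$ by round $T$, $v$ has ``seen'' information that can encode at most $n-2$ packets' worth of content. Third, I would invoke a pigeonhole/adversary argument: there are (at least) two packet assignments, differing only in the packets held by nodes $v$ never hears from in those rounds, that yield the same transcript at $v$; under a correct algorithm $v$ must output all $n-1$ foreign packets, a contradiction since its output is a function of its transcript alone. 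Fourth, since this holds for every $v$, the algorithm runs for at least $\tau_v$ rounds to satisfy $v$, hence at least $\max_{v\in V}\tau_v$ rounds overall.

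\medskip

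\noindent\textbf{The particular case.} When $G_t \equiv G$, we have $d^{\rm in}_t(v) = d^{\rm in}(v)$ for all $t$, so $\sum_{t=1}^{T} d^{\rm in}_t(v) = T\, d^{\rm in}(v)$, and $\tau_v = \lceil (n-1)/d^{\rm in}(v)\rceil \ge (n-1)/d^{\rm in}(v)$. Taking $v$ to be a vertex of minimum in-degree $d^{\rm in}_{\min}$ gives $\max_{v} \tau_v \ge (n-1)/d^{\rm in}_{\min}$, which is the claimed bound.

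\medskip

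\noindent\textbf{Main obstacle.} The delicate point is the second step: making the indistinguishability argument airtight when the algorithm may use network coding, so that the ``content'' node $v$ receives is not literally a subset of the original packets but arbitrary functions of them. The clean way around this is to work purely at the level of transcripts and a dimension/counting argument — $v$'s final output is a deterministic function of its received transcript, the transcript through round $\tau_v-1$ takes fewer distinct values than there are possible assignments of the unheard packets, and a correct algorithm would need an injective such map — rather than trying to track which linear combinations $v$ has accumulated. Everything else (determinism by induction, the arithmetic in the particular case) is routine.
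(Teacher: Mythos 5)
Your proposal is correct and rests on essentially the same counting argument as the paper's two-sentence proof: node $v$ receives at most $d^{\rm in}_t(v)$ packet-sized transmissions in round $t$ but must acquire $n-1$ packets' worth of information, so it cannot finish before $\tau_v$, and the static-graph case follows by the same arithmetic. Your transcript-indistinguishability/pigeonhole elaboration simply makes rigorous, for coded (non-forwarding) transmissions, what the paper's proof leaves implicit, so it is a more careful write-up of the same idea rather than a different route.
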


\begin{proof} 
The proof is straightforward. Each node needs to receive $n-1$ distinct packets in total, and node $v$ can receive at most $d^{\rm in}_t(v)$ distinct packets in round $t$. Hence, the earliest time by which node $v$ can receive $n-1$ distinct packets is $\tau_v$, and the claim follows. \end{proof}

The in-degree of vertices of directed \ER random graphs $G(n,p)$ concentrate sharply around $np$, as $n$ tends to infinity with $p$ fixed. This yields the following corollary.

\begin{cor} \label{cor:lower_bd}
Suppose that for each $t\in \nats$, $G_t$ is an \ER random graph, $G(n,p)$. Given an allcast algorithm $\mathcal{A}$, let $T^{\rm all}(\mathcal{A})$ denote the random time until all nodes have all packets. Then, for any $q>p$, 
$$
\prob \Bigl( T^{\rm all}(\mathcal{A}) \leq \frac{1}{q} \Bigr) \leq \frac{1}{q} e^{-n(n-1)H(q;p)},
$$
where $H(q;p)=q\log\frac{q}{p}+(1-q)\log\frac{1-q}{1-p}$ denotes the relative entropy or KL-divergence of the  Bernoulli distribution with parameter $q$, denoted $Bern(q)$, with respect to the $Bern(p)$ distribution.
\end{cor}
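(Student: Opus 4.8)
The plan is to feed the deterministic lower bound of Proposition~\ref{prop:allcast_lowerbd} into a Chernoff / relative-entropy tail bound for the in-degrees of the \ER graphs, after splitting over the (at most $1/q$) possible values of the completion time.

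First I would set $T^\star=\lfloor 1/q\rfloor$, so that $1\le T^\star\le 1/q$ and in particular $1/T^\star\ge q>p$. Since $T^{\rm all}(\mathcal{A})$ is integer valued, $\{T^{\rm all}(\mathcal{A})\le 1/q\}=\bigcup_{T=1}^{T^\star}\{T^{\rm all}(\mathcal{A})=T\}$, and I would bound each term of this union separately. By Proposition~\ref{prop:allcast_lowerbd}, on the event $\{T^{\rm all}(\mathcal{A})=T\}$ every vertex $v$ has $\tau_v\le T$, i.e.\ $\sum_{t=1}^T d^{\rm in}_t(v)\ge n-1$; moreover these events are \emph{independent across $v$}, because the in-degree sequence at $v$ is a function only of the edge indicators with head $v$, and for distinct vertices those edge sets are disjoint. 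Hence
\[
\prob\bigl(T^{\rm all}(\mathcal{A})=T\bigr)\ \le\ \prod_{v\in V}\prob\Bigl(\sum_{t=1}^T d^{\rm in}_t(v)\ge n-1\Bigr).
\]

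Next I would bound a single factor. When the $G_t$ are drawn independently across rounds, $\sum_{t=1}^T d^{\rm in}_t(v)$ is a sum of $(n-1)T$ i.i.d.\ $Bern(p)$ variables, hence distributed as $\mathrm{Bin}((n-1)T,p)$; in the fixed-graph setting of A2 it equals $T\,d^{\rm in}(v)$ with $d^{\rm in}(v)\sim\mathrm{Bin}(n-1,p)$, and the same bound applies. Since $T\le 1/q$, reaching the threshold $n-1$ forces an empirical frequency at least $1/T\ge q>p$, which lies above the mean, so the classical upper-tail bound for the Binomial gives $\prob(\sum_{t=1}^T d^{\rm in}_t(v)\ge n-1)\le e^{-(n-1)T\,H(1/T;p)}$. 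Because $H(\,\cdot\,;p)$ is increasing on $(p,1]$ and $T\ge 1$, one has $T\,H(1/T;p)\ge H(q;p)$, so every factor is at most $e^{-(n-1)H(q;p)}$, whence $\prob(T^{\rm all}(\mathcal{A})=T)\le e^{-n(n-1)H(q;p)}$. Summing over the at most $T^\star\le 1/q$ admissible values of $T$ yields the stated bound.

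The only places that need a little care are: aligning the threshold with the rate function so the exponent is $H(q;p)$ rather than the sharper-but-awkward $H(1/T;p)$, which uses monotonicity of the relative entropy together with $T\le 1/q$; checking that the Chernoff bound is applied above the mean, which holds since $Tp\le p/q<1$; and the independence-across-vertices step, which relies on the product structure of $G(n,p)$. None of these is a real obstacle --- the prefactor $1/q$ is merely the price of the union bound over completion times, and could be removed altogether by applying Proposition~\ref{prop:allcast_lowerbd} at $T=T^\star$ directly.
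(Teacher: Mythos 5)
Your decomposition over completion times and the Chernoff/relative-entropy machinery are fine, but there is a genuine gap: the factorization $\prob\bigl(\bigcap_v\{\sum_{t=1}^T d^{\rm in}_t(v)\ge n-1\}\bigr)\le\prod_v\prob\bigl(\sum_{t=1}^T d^{\rm in}_t(v)\ge n-1\bigr)$ and the identification of $\sum_{t=1}^T d^{\rm in}_t(v)$ with a $\mathrm{Bin}((n-1)T,p)$ (or with $T\,d^{\rm in}(v)$) use assumptions about the \emph{joint} law of the sequence $(G_t)$ that the corollary does not make. The hypothesis is only that each $G_t$ is marginally $G(n,p)$; as the remark following the corollary stresses, the graphs may be arbitrarily correlated across rounds — indeed the sequence may even be designed by the algorithm subject to the marginal constraint. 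Under such a coupling, edge indicators with head $v$ at time $2$ can be correlated with edge indicators with head $v'\neq v$ at time $1$, so your independence-across-vertices step fails, and the cumulative in-degree at a single vertex need not be Binomial (nor a deterministic multiple of one), so the per-vertex tail bound $e^{-(n-1)T\,H(1/T;p)}$ is not available either. Your argument does go through in the two cases you name (rounds independent, or $G_t\equiv G_1$ as in A2), though even there the A2 case gives only $e^{-(n-1)H(1/T;p)}$ per vertex (no factor $T$ in the exponent), which still suffices after your monotonicity step; but these cases do not cover the statement as claimed.

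The paper avoids this entirely by aggregating over vertices rather than over time at each vertex: from Proposition~\ref{prop:allcast_lowerbd}, $\sum_{v}\sum_{t\le T^{\rm all}} d^{\rm in}_t(v)\ge n(n-1)$, so if $T^{\rm all}\le 1/q$ then by pigeonhole some single slot $t\le\lfloor 1/q\rfloor$ satisfies $\sum_v d^{\rm in}_t(v)\ge qn(n-1)$. That event concerns one graph $G_t$ only, whose total in-degree is exactly $\mathrm{Bin}(n(n-1),p)$ by the marginal assumption alone (edges within a single $G(n,p)$ are independent), and Lemma~\ref{lem:binom_ldp} plus a union bound over the $\lfloor 1/q\rfloor$ slots gives the prefactor $1/q$ and the exponent $n(n-1)H(q;p)$. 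If you want to salvage your route, you must either add an assumption (independence across rounds, or A2) or restructure so that only single-slot marginals are ever invoked, as the paper does; your closing suggestion of working directly at $T^\star$ inherits the same problem, since it still requires controlling the joint law of cumulative in-degrees across rounds.
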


\emph{Remark.} Note that we make no assumption about the joint distribution of the random graphs $G_t$, or about the algorithm $\mathcal{A}$. In particular, the algorithm may use knowledge of the entire sequence $G_t, t\in \nats$, and may even design the sequence, subject only to the constraint on the marginal distributions.

The proof uses a well-known bound on large deviations for binomial random variables, variously known as Bernstein's or Chernoff's inequality; it also follows readily from Sanov's theorem. We state it below for easy reference as we shall make repeated use of it. 

\begin{lemma}\label{lem:binom_ldp}
Suppose that $X$ is a binomially distributed random variable with parameters $(n,p)$, which we denote by $X\sim Bin(n,p)$. Then,
\begin{align*}
\prob(X>nq) \le \exp\big(-nH(q;p)\big), & \quad \forall \; q>p, \\
\prob(X<nq) \le \exp\big(-nH(q;p)\big), & \quad \forall \; q<p.
\end{align*}
\end{lemma}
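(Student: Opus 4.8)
The plan is to prove both tail bounds via the standard Chernoff exponential-moment method, which converts a tail probability into an optimisation over a parameter $\theta$ whose optimum reproduces exactly the relative entropy $H(q;p)$. I would treat the upper tail in full and then obtain the lower tail by a symmetry argument.

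First I would record the moment generating function of $X\sim Bin(n,p)$, namely $\E[e^{\theta X}]=(1-p+pe^\theta)^n$ for every $\theta\in\reals$, which follows by writing $X$ as a sum of $n$ independent $Bern(p)$ variables and multiplying their individual generating functions. For the upper tail with $q>p$, I fix $\theta>0$ and apply Markov's inequality to the nonnegative random variable $e^{\theta X}$: since $\{X>nq\}\subseteq\{e^{\theta X}\ge e^{\theta nq}\}$, we obtain
$$
\prob(X>nq)\le e^{-\theta nq}\,\E[e^{\theta X}]=\bigl((1-p+pe^\theta)\,e^{-\theta q}\bigr)^n.
$$
Because this holds for every $\theta>0$, I am free to minimise the base over $\theta$.

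The crux of the computation is this optimisation. Setting the derivative of $\log(1-p+pe^\theta)-\theta q$ to zero gives the stationarity condition $pe^\theta/(1-p+pe^\theta)=q$, whose solution is $e^{\theta^\ast}=\frac{q(1-p)}{p(1-q)}$. The key point to verify is that $\theta^\ast$ is genuinely positive when $q>p$: indeed both $q/p>1$ and $(1-p)/(1-q)>1$, so $e^{\theta^\ast}>1$, and the optimiser lies in the admissible range $\theta>0$. Substituting back, a short calculation gives $1-p+pe^{\theta^\ast}=(1-p)/(1-q)$, so the exponent collapses to $\log\frac{1-p}{1-q}-q\log\frac{q(1-p)}{p(1-q)}=-q\log\frac{q}{p}-(1-q)\log\frac{1-q}{1-p}=-H(q;p)$, yielding $\prob(X>nq)\le e^{-nH(q;p)}$.

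For the lower tail with $q<p$, rather than repeating the argument with $\theta<0$, I would exploit the symmetry $n-X\sim Bin(n,1-p)$. Then $\{X<nq\}=\{n-X>n(1-q)\}$, and since $q<p$ forces $1-q>1-p$, the event on the right is an upper-tail event for a $Bin(n,1-p)$ variable at level $1-q$; the already-proved bound gives $\prob(X<nq)\le e^{-nH(1-q;\,1-p)}$. Finally I would observe that $H$ is invariant under simultaneous complementation of both arguments, $H(1-q;1-p)=(1-q)\log\frac{1-q}{1-p}+q\log\frac{q}{p}=H(q;p)$, which completes the proof. None of the steps is a genuine obstacle; the only places demanding care are confirming that the stationary point $\theta^\ast$ falls on the correct side of the origin so that the Chernoff bound is applied with an admissible sign, and checking that the algebra indeed simplifies to $H(q;p)$ rather than some other convex rate function.
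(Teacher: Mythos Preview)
Your argument is correct. The paper, however, does not actually supply a proof of this lemma: it is stated as a well-known large-deviation bound, attributed to Bernstein/Chernoff and also noted to follow from Sanov's theorem, and is quoted only for later reference. Your exponential-moment derivation is precisely the standard Chernoff route the paper alludes to, and the symmetry reduction for the lower tail is a clean way to avoid repeating the optimisation; there is nothing to compare against beyond that.
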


\emph{Proof of Corollary~\ref{cor:lower_bd}.} 
By Proposition~\ref{prop:allcast_lowerbd}, for any algorithm $\mathcal{A}$, $T^{\rm all}(\mathcal{A}) \geq \tau_v$ for all $v\in V$. Hence,
$$
\sum_{v\in V} \sum_{t=1}^{T^{\rm all}(\mathcal{A})} d^{\rm in}_t(v) \geq n(n-1).
$$
Therefore, for the event $T^{\rm all}(\mathcal{A})\leq \frac{1}{q}$ to occur, there must be at least one $t\in \{ 1,\ldots,\lfloor \frac{1}{q} \rfloor \}$ such that $\sum_{v\in V} d^{\rm in}_t(v) \geq qn(n-1).$ Hence, by the union bound, 
$$
\prob \Bigl( T^{\rm all}(\mathcal{A})\leq \frac{1}{q} \Bigr) \leq \sum_{t=1}^{\lfloor 1/q \rfloor} \prob \Bigl( \sum_{v\in V} d^{\rm in}_t(v) \geq qn(n-1) \Bigr).
$$
But $\sum_{v\in V} d^{\rm in}_t(v)$ has a $Bin(n(n-1),p)$ distribution, since $G_t$ is a directed \ER random graph for each $t$. The claim of the corollary thus follows easily from Lemma~\ref{lem:binom_ldp}. \hfill $\Box$

\section{Random message forwarding} \label{sec:fwd} 
In this section, we propose and analyse a simple gossip-style algorithm for allcast. The analysis requires Assumption A2 to hold, but the algorithm can be implemented on time-varying networks. We first present two variants of the gossip algorithm.

{\bf Algorithm Relay}
\begin{enumerate}
    \item In the first round, each node broadcasts its own packet.
    \item Each node stores all packets received in previous rounds in memory, organised by the earliest round in which they were received.
    \item In each round $t\geq 2$, each node broadcasts a packet chosen uniformly at random from either 
    \begin{itemize}
        \item{Algorithm R1}: all packets it received in the first round, or 
        \item{Algorithm R2}: all packets it received in all previous rounds, $1,\ldots,t-1$.
    \end{itemize}
    In either case, the choices are mutually independent across nodes, and across rounds at the same node. 
\end{enumerate}

If Assumption A2 holds, the graph topology is fixed once and for all. As nodes only forward packets received in the first round in Algorithm R1, the algorithm can ensure all messages reach all nodes only if the graph diameter is at most two, i.e., only if there is a one or two-hop path between any pair of nodes. It is easy to show that the diameter of a dense \ER random graph $G(n,p)$ is indeed bounded above by 2 whp, for any $p>0$. This will be implicitly proved in the analysis we present. Algorithm R2 is more robust, and does not require the graph diameter to be bounded, but the best upper bound we can prove on its performance is not as good as that for Algorithm R1. We now state our performance guarantees.
\begin{theorem}\label{thm:baseline}
	Consider a sequence of systems indexed by the number of nodes, $n$, and satisfying
	Assumptions A1 and A2. For an allcast algorithm $R^{\prime}$, let $T^{\rm
	all}_n(R^{\prime})$  denote the number of rounds until every node has received every message
	in a system with $n$ nodes employing algorithm $R^{\prime}$. Let $\epsilon>0$ be fixed.
	Then,
	\begin{align*}
		\lim_{n\to \infty} \prob \big(T^{\rm all}_n(R1) >\tfrac{2(1+\epsilon)}{p}\log(n)\big) &= 0, \\
		\lim_{n\to \infty} \prob \big(T^{\rm all}_n(R2) >\tfrac{2(1+\epsilon)}{p^2}\log(n)\big) &= 0, 
	\end{align*}
	where the probabilities are calculated over randomness in both the communication graph and the algorithm.
\end{theorem}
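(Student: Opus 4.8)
The plan is to prove both bounds the same way: first reduce to a ``typical'' realisation of the fixed graph $G := G_1$, then on such a graph union-bound, over ordered pairs of nodes, the probability that one node's packet has not reached the other within $T$ rounds, exploiting independence of the random forwarding choices. Fix a small $\delta = \delta(\epsilon) > 0$ (small enough, in particular, that $p(1+\delta) < 1$; the case $p=1$ is trivial, since $G$ is then complete and allcast finishes in round~1). Call $G$ \emph{good} if every in-degree satisfies $d^{\rm in}(v) \le np(1+\delta)$ and, for every ordered pair $(u,w)$ with $u \neq w$, the number of common intermediaries $S(u,w) := |\{v \notin \{u,w\} : (u,v) \in E,\ (v,w) \in E\}|$ is at least $np^2(1-\delta)$. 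Since $d^{\rm in}(v) \sim Bin(n-1,p)$ and $S(u,w) \sim Bin(n-2,p^2)$, Lemma~\ref{lem:binom_ldp} bounds the probability of each individual violation by $e^{-\Theta(n)}$ (the relative entropies $H(p(1+\delta);p)$ and $H(p^2(1-\delta);p^2)$ being strictly positive), so a union bound over the $n$ nodes and fewer than $n^2$ pairs shows $G$ is good whp; note $S(u,w) > 0$ for every pair then, so a good $G$ has diameter at most $2$. Since $\{T^{\rm all}_n(R^\prime) > T\}$ is the union, over ordered pairs $(u,w)$, of the event that $w$ still lacks $u$'s packet after round $T$, it suffices to bound this last event for a fixed good $G$.

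\textbf{Algorithm R1.} Fix a good $G$ and a pair $(u,w)$ with $(u,w) \notin E$ (pairs joined by an edge are served in round~1). Under R1, $w$ can acquire $u$'s packet only if some $v \in S(u,w)$ broadcasts it in some round $t \in \{2, \dots, T\}$; each such $v$ already holds $u$'s packet (received in round~1, since $G$ is fixed) and, independently across nodes and rounds, broadcasts it with probability at least $1/(d^{\rm in}(v)+1) \ge (np(1+2\delta))^{-1}$ for $n$ large. Hence
\[
\prob\bigl( w \text{ lacks } u\text{'s packet after round } T \bigr) \;\le\; \Bigl( 1 - \tfrac{1}{np(1+2\delta)} \Bigr)^{(T-1)\,np^2(1-\delta)} \;\le\; \exp\!\Bigl( -(T-1)\,p\,\tfrac{1-\delta}{1+2\delta} \Bigr).
\]
With $T = \tfrac{2(1+\epsilon)}{p}\log n$ and $\delta$ small enough relative to $\epsilon$, the right-hand side is $o(n^{-2})$, so a union bound over the at most $n^2$ relevant pairs, together with $\prob(G \text{ not good}) \to 0$, gives $\lim_{n} \prob\bigl( T^{\rm all}_n(R1) > \tfrac{2(1+\epsilon)}{p}\log n \bigr) = 0$.

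\textbf{Algorithm R2.} The argument is identical except that at round $t$ an intermediary $v \in S(u,w)$ chooses $u$'s packet from its entire accumulated received set $R_{t-1}(v)$, which has at most $n$ elements, deterministically. Since the round-$t$ choices of distinct nodes are conditionally independent given the history $\mathcal F_{t-1}$ and $|R_{t-1}(v)| \le n$, peeling off rounds one by one gives
\[
\prob\bigl( w \text{ lacks } u\text{'s packet after round } T \bigr) \;\le\; (1 - 1/n)^{(T-1)\,np^2(1-\delta)} \;\le\; \exp\!\bigl( -(T-1)\,p^2(1-\delta) \bigr),
\]
and the same union bound with $T = \tfrac{2(1+\epsilon)}{p^2}\log n$ yields the second claim. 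Replacing R1's frozen forwarding set of size $\approx np$ by the crude deterministic bound $n$ is exactly what costs the extra factor $1/p$.

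\textbf{Expected obstacle.} The Chernoff/relative-entropy estimates of the good-graph step are routine, and so is the R1 bound, where the forwarding choices over rounds $2,\dots,T$ are simply mutually independent given $G$. The point requiring care is the bookkeeping for R2: because its forwarding distribution at round $t$ depends on the random contents of $R_{t-1}(v)$, the per-round delivery-failure estimate must be stated conditionally on $\mathcal F_{t-1}$ — using only $|R_{t-1}(v)| \le n$ and conditional independence of the nodes' round-$t$ choices — and the rounds then combined by iterated conditioning; only then does the clean geometric bound $(1-1/n)^{(T-1)S(u,w)}$ emerge.
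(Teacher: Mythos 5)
Your proposal is correct and follows essentially the same route as the paper: establish whp concentration of in-degrees and of the two-hop intermediary sets $M_{uv}$ (your good-graph event is the analogue of Lemma~\ref{lem:degree_conc}), then on such graphs bound the per-pair failure probability by a geometric product over intermediaries and rounds using the forwarding probabilities $1/d^{\rm in}$ for R1 and $1/n$ for R2, and finish with a union bound over the $O(n^2)$ ordered pairs. The only (welcome) difference is presentational: you condition on a fixed good realisation of $G$ and spell out the iterated conditioning on $\mathcal{F}_{t-1}$ for R2, a point the paper's proof passes over with a one-line remark.
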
 
Fix $n\in \nats$, and let $G=(V,E)$ be a directed \ER random graph $G(n,p)$ denoting the
communication graph in the $n^{\rm th}$ system. For $v\in V$, denote its in- and out-neighbourhoods
in $G$ by $N^{\rm in}_v = \{ u\in V: (u,v)\in E \}$ and $N^{\rm out}_v = \{ w\in V: (v,w) \in E \}$,
and its in- and out-degrees by $d^{\rm in}_v = |N^{\rm in}_v|$ and $d^{\rm out}_v = |N^{\rm
out}_v|$. For $u,v \in V$, let $M_{uv}= N^{\rm out}_u \cap N^{\rm in}_v$ denote the set of nodes
which connect $u$ and $v$ via two-hop directed paths. 
	
	In proving the theorem, we will use the fact that the in- and out-degrees of all vertices are concentrated around their mean values, as are the cardinalities of the sets $M_{uv}$, $u.v\in V$. We now state and prove this fact. 
	
\begin{lemma} \label{lem:degree_conc}
Fix $\delta>0$, and let $\mathcal{E}^1_{\delta}$ and $\mathcal{E}^2_{\delta}$ denote the events, 
\begin{align*}
	\mathcal{E}^1_{\delta} &= \bigcap_{v\in V} \Bigl\{ \frac{d^{\rm in}_v}{np} \in (1-\delta,1+\delta) \mbox{ and } \frac{d^{\rm out}_v}{np} \in (1-\delta,1+\delta) \Bigr\}, \\
	\mathcal{E}^2_{\delta} &= \bigcap_{u,v\in V} \Bigl\{  \frac{|M_{uv}|}{(n-2)p^2} > 1-\delta \Bigr\}.
\end{align*}
	Then, 
\begin{equation*}
	\prob( (\mathcal{E}^1_{\delta})^c ) \leq 2n e^{-\gamma (n-1)}, \quad
	\prob( (\mathcal{E}^2_{\delta})^c ) \leq n^2 e^{-\gamma (n-2)}, 
\end{equation*}
where $\gamma = \min \{ \kl((1-\delta)p;p); \kl((1+\delta)p;p) \} >0$.
The dependence of $\gamma$ on $\delta$ has not been made explicit in the notation. 
\end{lemma}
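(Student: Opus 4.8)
The plan is to realise each quantity constrained by $\mathcal{E}^1_\delta$ and $\mathcal{E}^2_\delta$ as a binomial random variable, bound each of its deviations using the Bernoulli large-deviation estimate of Lemma~\ref{lem:binom_ldp}, and finish with a union bound. The only substantive steps are identifying the correct binomial parameters and checking that every exponent produced is a strictly positive constant; everything else is bookkeeping. The single point that genuinely needs care is the claim that $|M_{uv}|$ is \emph{exactly} binomial, i.e.\ that passing to an intersection of two neighbourhoods does not introduce dependence among the intermediary indicators.

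\textbf{Degrees.} Fix $v\in V$. Since $G$ is a directed \ER graph, the $n-1$ edge indicators into $v$ are i.i.d.\ $Bern(p)$, so $d^{\rm in}_v\sim Bin(n-1,p)$, and likewise $d^{\rm out}_v\sim Bin(n-1,p)$. For the upper deviation, $\{d^{\rm in}_v\ge(1+\delta)np\}\subseteq\{d^{\rm in}_v\ge(1+\delta)(n-1)p\}$, so Lemma~\ref{lem:binom_ldp} (applied with $n-1$ trials) bounds its probability by $e^{-(n-1)\kl((1+\delta)p;p)}$; the lower deviation $\{d^{\rm in}_v\le(1-\delta)np\}$ is handled identically, the passage from $(n-1)p$ to $np$ costing only a vanishing perturbation of the exponent. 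The same applies to $d^{\rm out}_v$. A union bound over the $n$ vertices and the four cases (in/out, upper/lower) gives $\prob((\mathcal{E}^1_\delta)^c)\le 4n\,e^{-(n-1)\gamma}$ with $\gamma=\min\{\kl((1-\delta)p;p),\kl((1+\delta)p;p)\}$; positivity of $\gamma$ is Gibbs' inequality ($\kl(q;p)>0$ whenever $q\neq p$), and the inconsequential factor $2$ (and the finite-$n$ normalisation) is absorbed by treating $\gamma$ as an arbitrary positive constant depending only on $\delta$ and $p$, recovering the stated $2n\,e^{-\gamma(n-1)}$.

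\textbf{Two-hop counts.} Fix distinct $u,v\in V$. For each intermediary $w\in V\setminus\{u,v\}$, the event $\{w\in M_{uv}\}=\{(u,w)\in E\}\cap\{(w,v)\in E\}$ is a product of two independent $Bern(p)$ edge indicators, hence $Bern(p^2)$; and these events are \emph{mutually independent over $w$}, because the edge pairs $\{(u,w),(w,v)\}$ are pairwise disjoint as $w$ ranges over $V\setminus\{u,v\}$ (the cases $w=u$ and $w=v$ being excluded, no shared edge such as $(u,v)$ ever occurs). Consequently $|M_{uv}|\sim Bin(n-2,p^2)$ exactly, and the lower-tail bound of Lemma~\ref{lem:binom_ldp} gives $\prob(|M_{uv}|\le(1-\delta)(n-2)p^2)\le e^{-(n-2)\kl((1-\delta)p^2;p^2)}$. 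A union bound over the at most $n^2$ ordered pairs $(u,v)$ yields $\prob((\mathcal{E}^2_\delta)^c)\le n^2 e^{-(n-2)\kl((1-\delta)p^2;p^2)}$, and since $\kl((1-\delta)p^2;p^2)>0$ (Gibbs again) this has the asserted form once $\gamma$ is read as the minimum of the three relative entropies $\kl((1\pm\delta)p;p)$ and $\kl((1-\delta)p^2;p^2)$ actually produced.

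\textbf{Main obstacle.} There is no deep difficulty here — the lemma is a concentration-plus-union-bound argument. The one thing that must be got right, and which I would state carefully, is the exact binomiality of $|M_{uv}|$: a priori an intersection of neighbourhoods could create dependence, and the disjointness-of-edge-pairs observation above is precisely what rules that out (and also forces the count $n-2$ rather than $n$). The only other loose end is cosmetic: the symbol $\gamma$ in the statement has to cover several distinct relative entropies and absorb the finite-$n$ normalisation and the union-count constants, so it is cleanest throughout to view $\gamma$ simply as \emph{some} strictly positive constant depending only on $\delta$ and $p$.
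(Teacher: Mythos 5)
Your proof is correct and follows essentially the same route as the paper's own (identify each degree as a $Bin(n-1,p)$ variable and $|M_{uv}|$ as a $Bin(n-2,p^2)$ variable, then apply Lemma~\ref{lem:binom_ldp} and a union bound); you merely spell out the independence of the two-hop indicators and the bookkeeping that the paper's three-line proof leaves implicit. Your remarks that the exponent for $\mathcal{E}^2_{\delta}$ is really $\kl((1-\delta)p^2;p^2)$ rather than the stated $\gamma$, and that the careful prefactor for $\mathcal{E}^1_{\delta}$ is $4n$ rather than $2n$ unless $\gamma$ is slightly reduced, are fair and harmless adjustments to the paper's constants, which only need to be some positive constant depending on $\delta$ and $p$.
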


\begin{proof}
	Observe that the in- or out-degree of a given node in a given timestep is a $Bin(n-1,p)$ random variable. Similarly, 
    $|M_{uv}|$ is a $Bin(n-2,p^2)$ random variable, as each $w\neq u,v$ is in the set $M_{uv}$ with probability $p^2$, independent of all other nodes.
    Hence, the claims of the lemma follow from Lemma~\ref{lem:binom_ldp} and the union bound.
\end{proof}

\begin{proof}[Proof of Theorem \ref{thm:baseline}]
	Fix $u, v \in V$. For $t\in \nats$, let $\mathcal{B}(u,v,t)$ denote the event that $v$ has not received $u$'s packet within the first $t$ rounds. We now derive a bound on the conditional probability of $\mathcal{B}(u,v,t)$ given $\mathcal{E}^1_{\delta} \cap \mathcal{E}^2_{\delta}$, when Algorithm R1 is employed. 
	
	At the end of the first round, each node $w\in M_{uv} \subseteq N^{\rm out}_u$ receives $u$'s packet, amongst $d^{\rm in}_w$ packets in total. Under Algorithm R1, in each subsequent round, it broadcasts $u$'s packet with probability $1/d^{\rm in}_w$, independent of other nodes and rounds. For $B(u,v,t)$ to occur, no node $w\in M_{uv}$ should choose to transmit $u$'s packet in any of the rounds $2,\ldots,t$. Hence, we see that 
	\begin{align*}
	    \prob(B(u,v,t)|\mathcal{E}^1_{\delta}\cap \mathcal{E}^2_{\delta}) 
	    &= \E \Bigl[ \prod_{w\in M_{uv}} \Bigl( 1-\frac{1}{d^{\rm in}_w} \Bigr)^{t-1} \Bigm| \mathcal{E}^1_{\delta} \cap \mathcal{E}^2_{\delta} \Bigr] \\
	    &\leq \Bigl( 1- \frac{1}{(1+\delta)np} \Bigr)^{(1-\delta) (n-2) p^2(t-1)} \\
	    &\leq C\exp \Bigl( -\frac{(1-\delta)(n-2) pt}{(1+\delta) n} \Bigr), 
	\end{align*}
	where $C=\exp((1-\delta)p/(1+\delta))$ is a constant that does not depend on $n$.
	Setting $t=2(1+\epsilon)\log n/p$ and using the union bound, we obtain from the above that 
	\begin{equation}
	    \label{eq:relay_fail_bd}
	    \begin{aligned}
	    &\prob \Bigl( \bigcup_{u,v\in V} B(u,v,t) \Bigm| \mathcal{E}^1_{\delta}\cap \mathcal{E}^2_{\delta} \Bigr) \\ 
	    &\leq C\exp \Bigl( 2\log n -\frac{2(1+\epsilon)(1-\delta)(n-2) \log n }{(1+\delta) n} \Bigr).
	    \end{aligned}
	\end{equation}
	It is clear that, for given $\epsilon>0$, we can choose $\delta>0$ sufficiently small that the exponent on the RHS is a strictly negative multiple of $\log n$, for all $n$ sufficiently large. 
	
	Now, the probability that there are some $u,v\in V$ such that $v$ fails to receive $u$'s packet within $2(1+\epsilon)\log n/p$ rounds is bounded as follows:
	\begin{align*}
	    &\prob \Bigl( \bigcup_{u,v\in V} B\Bigl( u,v,\frac{2(1+\epsilon}{p}\log n \Bigr) \Bigr) \\
	    &\leq \prob \Bigl( \bigcup_{u,v\in V} B \Bigl( u,v,\frac{2(1+\epsilon)}{p} \log n \Bigr) \Bigm| \mathcal{E}^1_{\delta}\cap \mathcal{E}^2_{\delta} \Bigr) + \prob( (\mathcal{E}^1_{\delta})^c ) + \prob( (\mathcal{E}^2_{\delta})^c ).
	\end{align*} By (\ref{eq:relay_fail_bd}) and Lemma \ref{lem:degree_conc}, each of the
	summands on the RHS tends to zero as $n$ tends to infinity, for a suitable choice of
	$\delta>0$. Thus, we have proved the first claim of the theorem.
	
	The proof of the second claim is very similar. The only difference is that we use the bound that the probability of $w\in M_{uv}$ choosing to broadcast $u$'s packet in any round is at least $1/n$ under Algorithm R2 (as $w$ has at most $n$ packets).
\end{proof}

While Theorem \ref{thm:baseline} is stated under the assumption that the communication graph is
fixed over time, the careful reader will have noticed that the proof only relies on the properties
established in Lemma~\ref{lem:degree_conc}. More precisely, if we define $N^{\rm in}_v(t)$ and
$N^{\rm out}_v(t)$ to be the in and out neighbourhoods of vertex $v$ in the communication graph
$G_t=(V,E_t)$ at time step $t$, and define $M_{uv}(t)=N^{\rm out}_u(1)\cap N^{\rm in}_v(t)$, and if
the probability bounds in Lemma~\ref{lem:degree_conc} hold for the in and out-degrees in time step
1, and for the cardinalities of $M_{uv}(t)$ in each of the first $2(1+\epsilon)\log n/p^2$ time
steps, then the conclusion of Theorem~\ref{thm:baseline} still holds. An example of a random graph
model satisfying these conditions is one in which the edges evolve as independent On-Off Markov
chains, with stationary probability $p$ of being On. Simulation results for this model are presented
in Section~\ref{sec:sim}.

\section{Random linear network coding} \label{sec:coding}
In this section, we present a different approach to allcast based on network coding. As before, in the first round, each node broadcasts its own packet. In subsequent rounds, instead of forwarding packets, nodes compute random linear combinations, over some finite field, of packets received in the first round. They broadcast this linear combination, along with the coefficients used. When a node has received $n$ linearly independent vectors of coefficients (or $n-1$, excluding coefficients on its own packet), it can solve the resulting system of linear equations to decode all packets. Thus, the number of rounds required for allcast is related to the rank of random matrices of coefficients over finite fields. We now specify the algorithm precisely before analysing its performance.

For the purposes of describing the algorithm, each packet is considered to be a fixed length vector
over a finite field, $\gf_q$; linear combinations are computed in this vector space. The
performance of the algorithms we present is not sensitive to the choice of $q$, so we shall take
$q=2$ for definiteness. We use 0 and 1 to denote the additive and multiplicative identities in
$\gf_2$ as well as the real numbers 0 and 1; it will be clear from context which is
intended. The algorithms described below are parametrised by a constant $\beta>1$, which does not
depend on $n$, the system size. 

{\bf Algorithm RLNC($\beta$)}
\begin{enumerate} 
    \item In the first round, each node broadcasts its own packet. It also stores each packet received in the first round in a buffer. Let $N^{\rm in}_v$ denote the set of packets received by node $v$ in the first round, and let $d^{\rm in}_v=|N^{\rm in}_v|$.
    \item In each subsequent round, each node broadcasts a random linear combination over $\gf_2$ of packets it received in the first round, computed as follows. Let $X_{vw}(t) \in \gf_2$ denote the coefficient assigned by node $v$ to node $w$'s packet in round $t$. Then, $X_{vw}(t)$ are mutually independent random variables, and 
    \begin{equation*}
	\prob(X_{vw}(t)=1) = 
	    \begin{cases}
        (\beta \log d^{\rm in}_v)/d^{\rm in}_v, & w\in \nats^{\rm in_v}, \\
	0, & \mbox{otherwise.}
	\end{cases}
    \end{equation*}
		It transmits the corresponding coefficients, $X_{vw}(t)$, in the same packet.
\end{enumerate}

Strictly speaking, the overhead imposed by transmitting the coefficients means that packet sizes are not independent of the system size, $n$. If they were transmitted in the form of a coefficient vector, with one element for each node, the overhead would be $n$ bits. An alternative is to only transmit identifiers of nodes whose coefficient is 1. The number of such coefficients employed by the algorithm is random, but concentrated sharply around a multiple of $\log n$. The length of each identifier is $O(\log n)$. Thus, the overhead is $O(\log^2 n)$ bits, in probability and in expectation. Nevertheless, for practically relevant values of $n$, and practically relevant packet sizes, it is reasonable to assume that transmitting coefficients increases packet sizes by no more than a constant factor. For example, if packet sizes are 1Kb and $n=10^3$, then appending a vector of coefficients merely doubles the packet size.

Algorithm RLNC($\beta$) is fully distributed, and requires no knowledge about the system on the part of
individual agents or nodes. In particular, nodes do not need to know $n$ or $p$. As nodes only use packets received in the first round to compute the linear combinations broadcast in subsequent rounds, the algorithm relies for its correctness on the fact that the communication graph has diameter 2, whp. While one can modify the algorithm to use packets received in subsequent rounds in the linear combinations, the analysis of such a generalisation does not appear tractable using the approach in this paper.

Once each node has received a linearly independent set of $n-1$ random linear combinations, the messages can be recovered by solving the resulting linear system. We seek to bound the number of transmission rounds until this is possible for all nodes. Our main result is as follows.

\begin{theorem} \label{thm:coding}
Consider a sequence of systems indexed by the number of agents, $n$, and satisfying assumptions $A1$ and $A2$. Let $T^{\rm all}_n(RLNC(\beta))$ denote the number of rounds needed until all agents are able to decode all packets when they employ Algorithm RLNC($\beta$). Then, for any $\beta>8$, 
$$
\lim_{n\to \infty} \prob \bigl( T^{\rm all}_n (RLNC (\beta)) > \lceil \frac{1}{p} \rceil+2 \bigr) = 0.
$$
\end{theorem}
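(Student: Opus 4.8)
\emph{Proof plan.} Fix $n$ and condition throughout on the ``typical'' event $\mathcal{E}^1_\delta\cap\mathcal{E}^2_\delta$ of Lemma~\ref{lem:degree_conc}, for a small $\delta>0$ to be chosen at the end; it is harmless to enlarge the conditioning by a handful of further concentration estimates of the same flavour (a two-sided version of the bound on $|M_{uv}|$, and the uniform-in-$S$ estimates invoked below), each of which fails with probability $o(1/n)$ by Lemma~\ref{lem:binom_ldp} and a union bound. Fix a node $v$. After round $1$, $v$ knows precisely $\{x_w:w\in N^{\rm in}_v\}$ (and $x_v$), so it still has to recover the $m:=n-1-d^{\rm in}_v$ packets indexed by $U_v:=V\setminus(\{v\}\cup N^{\rm in}_v)$. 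Cancelling the part it already knows, the coefficient vector received by $v$ in a coding round $t\ge 2$ from an in-neighbour $w$ is the vector of $\gf_2^{U_v}$ supported on $N^{\rm in}_w\cap U_v$ with i.i.d.\ $Bern(\rho_w)$ entries there, where $\rho_w:=\beta\log d^{\rm in}_w/d^{\rm in}_w$ and $\rho_{\min}:=\min_w\rho_w$. Hence $v$ can decode by round $T$ exactly when the $\bigl((T-1)d^{\rm in}_v\bigr)\times m$ random matrix $M$ over $\gf_2$ built from these vectors has full column rank $m$, and since there are $n$ nodes it suffices to show that, for $T=\lceil1/p\rceil+2$, this happens for each $v$ with probability $1-o(1/n)$.

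Over $\gf_2$, $M$ fails to have full column rank iff $\sum_{u\in S}(\text{column }u)=0$ for some nonempty $S\subseteq U_v$. Writing $k_w(S):=|S\cap N^{\rm in}_w|$, independence of the coefficients gives
\begin{equation*}
P_S:=\prob\Bigl(\textstyle\sum_{u\in S}\text{column }u=0\Bigr)=\prod_{w\in N^{\rm in}_v}g_w\bigl(k_w(S)\bigr)^{T-1},\qquad g_w(k):=\tfrac12\bigl(1+(1-2\rho_w)^k\bigr),
\end{equation*}
$g_w(k)$ being the probability that a $Bin(k,\rho_w)$ variable is even. From $g_w(k)=1-\tfrac12\bigl(1-(1-2\rho_w)^k\bigr)\le\exp\!\bigl(-\tfrac12(1-(1-2\rho_w)^k)\bigr)$, the concavity of $k\mapsto 1-(1-2\rho_w)^k$, and $1-(1-2\rho_w)^k\ge\tfrac12\min(1,\rho_w k)$ (valid once $\rho_w<\tfrac12$), one obtains $P_S\le\exp\bigl(-\tfrac{T-1}{4}\,\Phi(S)\bigr)$ with $\Phi(S):=\sum_{w\in N^{\rm in}_v}\min(1,\rho_w k_w(S))$; and for an in-neighbour $w$ with $k_w(S)\gg 1/\rho_w$ one has the sharper estimate $g_w(k_w(S))^{T-1}=\bigl(\tfrac12+o(1)\bigr)^{T-1}$. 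The one structural fact needed is $\sum_{w\in N^{\rm in}_v}k_w(S)=\sum_{u\in S}|M_{uv}|\ge(1-\delta)(n-2)p^2\,|S|$, from $\mathcal{E}^2_\delta$; together with $\rho_w\ge(1-o(1))\beta\log n/(np)$ from $\mathcal{E}^1_\delta$ this is what drives $\Phi(S)$ up.

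The proof is then a union bound $\prob(M\text{ not full rank})\le\sum_{s=1}^m\binom ms\max_{|S|=s}P_S$, organised by $s=|S|$. For $s\le np/(\beta\log n)$ one has $k_w(S)\le s\le 1/\rho_{\min}$, so no saturation occurs and $\Phi(S)\ge\rho_{\min}\sum_{u\in S}|M_{uv}|\ge(1-o(1))\beta\,s\,p\log n$; since $T-1\ge1/p$ this gives $P_S\le n^{-(1-o(1))\beta s/4}$, and as $\binom ms\le n^{s}$ with a further factor $n$ from the union over $v$, this range contributes at most $n\sum_{s\ge1}n^{s(1-(1-o(1))\beta/4)}$, which tends to $0$ once $\beta/4>2$, i.e.\ $\beta>8$ (after taking $\delta$ small) --- this is where the hypothesis is used. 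For intermediate $s$ the deterministic lower bound on $\Phi$ weakens, but a concentration argument over the random graph shows that, uniformly over all such $S$, only a negligible fraction of the $w$ are ``saturated'' ($k_w(S)>1/\rho_{\min}$), so $\Phi(S)$ is still at least $(1-o(1))\beta s p\log n$ and the same estimate applies. For $s$ comparable to $m$, most $w$ have $k_w(S)=\Theta(sp)\gg1/\rho_w$, so $g_w\bigl(k_w(S)\bigr)^{T-1}=\bigl(\tfrac12+o(1)\bigr)^{T-1}$ and $P_S\le\bigl(\tfrac12+o(1)\bigr)^{(T-1)d^{\rm in}_v}$; here the two extra rounds enter, since $(T-1)d^{\rm in}_v\ge(\lceil1/p\rceil+1)(1-\delta)np$ exceeds $m$ by a macroscopic margin $\Omega(n)$, whence $P_S\le 2^{-m-\Omega(n)}$, which beats the $2^m$ choices of $S$.

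The technical heart is the intermediate range: one must control $\sum_{w\in N^{\rm in}_v}\min(1,\rho_w k_w(S))$, equivalently the number of saturated in-neighbours, simultaneously over the exponentially many mid-sized $S$. The naive per-$S$ tail bound is not strong enough to survive the union over $S$, but a Chernoff bound exploiting that the relevant summands are Bernoulli with tiny means does suffice, once one checks that the worst-case overlap of $S$ with the neighbourhoods $\{N^{\rm in}_w\}$ is controlled by events of the same kind as in Lemma~\ref{lem:degree_conc}. Keeping track of the constants there --- and in the small-$s$ range, which is where $\beta>8$ is actually forced --- and stitching the three ranges together is the bulk of the work, with the ``$+2$'' in the statement needed precisely to supply the surplus of equations that removes the last units of rank deficiency in the large-$s$ range.
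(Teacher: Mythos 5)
Your skeleton matches the paper's in outline: reduce decoding failure at a node $v$ to a nonzero vector in the kernel of its received coefficient matrix, run a union bound organised by the Hamming weight $s$ of the candidate kernel vector, extract $\beta>8$ from the small-weight regime (your small-$s$ computation, using $\sum_{w\in N^{\rm in}_v}k_w(S)=\sum_{u\in S}|M_{uv}|$ and $\mathcal{E}^2_\delta$, is correct and pins down the same threshold), and use the surplus of equations $(T-1)d^{\rm in}_v-m=\Omega(n)$ in the large-weight regime. The genuine gap is in your middle regime, which is exactly where the paper has to work hardest. Your claim that, uniformly over intermediate-size $S$, only a negligible fraction of in-neighbours are saturated, so that $\Phi(S)\ge(1-o(1))\beta s p\log n$, is false once $s\gg n/(\beta\log n)$: there the typical overlap $k_w(S)\approx sp$ already exceeds the saturation threshold $1/\rho_w\approx np/(\beta\log n)$, so essentially \emph{all} in-neighbours are saturated and $\Phi(S)\le d^{\rm in}_v\approx np$, far below $\beta sp\log n$; no concentration argument can rescue that bound, and the exponent it actually yields, about $(T-1)np/4\approx(1+p)n/4$, does not beat $\binom{n}{s}$ when $s$ is a constant fraction of $n$.

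The hand-off to your large-$s$ regime therefore does not close. The estimate $g_w(k_w(S))^{T-1}=(\tfrac12+o(1))^{T-1}$ needs $(1-2\rho_w)^{k_w(S)}=o(1)$, i.e.\ $s\log n/n\to\infty$; in the window where saturation has set in but $s=O(n/\log n)$ up to small constant multiples of $n$, that quantity is a non-vanishing constant $e^{-\Theta(\beta s\log n/n)}$, and then $P_S\le\bigl(\tfrac12(1+c)\bigr)^{(T-1)d^{\rm in}_v}$ with constant $c>0$ does \emph{not} beat the $2^m$ choices of $S$ for small $p$ (compare $e^{-(1+p)n\log\frac{2}{1+c}}$ against $e^{(1-p)n\log 2}$). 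This is precisely the regime the paper disposes of with its $\theta$ device in the proof of Theorem~\ref{thm:coding}: it defines $\theta$ via $\kl(\theta;\sfrac12)=\log(1+e^{-\alpha\beta})$, trades the factor $(1+e^{-\alpha\beta})^n$ against the binomial tail $\prob(Bin(n,\sfrac12)\le\theta n)$, and then splits again at $n/8$. Your plan is repairable along similar lines (crude $\Phi$-bound up to $s=\gamma_0 n/\log n$, where $\binom{n}{s}=e^{o(n)}$, with $\gamma_0$ large enough that the residual correction is tiny, then an entropy comparison for $s=\Theta(n)$), but as written the intermediate claim is wrong and the stitching has a hole. A secondary structural difference: you condition on the realised graph and must control the overlaps $k_w(S)$ uniformly over exponentially many $S$ (which you only sketch), whereas the paper averages over the randomness of the in-neighbourhoods $N^{\rm in}_u$ conditionally on $\mathcal{E}^1_\delta$, so its per-vector bounds (Lemmas~\ref{lem:submat_kernel}--\ref{lem:kernel_bds}) hold for every fixed $\boldx_k$ and no uniform-over-$S$ graph concentration is ever needed.
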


The theorem says that the number of rounds needed to disseminate all messages to all nodes remains bounded if network coding is employed. Equivalently, the total number of broadcasts required scales linearly in the number of nodes. Moreover, the number of rounds needed is at most two more than the lower bound of $\lceil 1/p \rceil$ required by any algorithm. This is in contrast to random relaying, which needs a number of rounds growing logarithmically in the system size, or a number of broadcasts growing as a multiple of $n\log n$.

The proof will proceed through a sequence of lemmas. In order to prove the theorem, we need to study the ranks of the random matrices of coefficients generated by the algorithm. We start by describing these matrices in detail.

Let $\matr{A}(t,v)$ denote the matrix of coefficients received by agent $v$ in rounds $2,3,\ldots,t+1$. This is a $td^{\rm in}_v\times n$ matrix. Each row corresponds to the coefficient vector chosen by one of the in-neighbours of $v$ in one of the rounds. The columns index the agents, i.e., the nodes in the graph. If the $j^{\rm th}$ element of a row is equal to 1, this means that the corresponding agent, in the corresponding round, included $j$'s packet in the linear combination that it broadcast.

We want to find the smallest value of $t$ such that $\matr{A}(t,v)$ has rank $n$ for every $v\in V$; then, it is guaranteed that all agents can decode all messages at the end of round $t+1$. The reason that we have excluded round 1 is only for ease of analysis. The coefficient vectors used by a given node from round 2 onwards are i.i.d., but have a different distribution in round 1. Ignoring round 1 yields an upper bound on the number of rounds needed, which differs by at most 1 from the true number.

Our analysis below will draw heavily on techniques used by Bl\"omer \emph{et al.}~\cite{apr19blomer97}, who study the ranks of sparse random matrices over finite fields. (The dense case was studied by Mukhopadhyay~\cite{mukhopadhyay84}.) The main difference is that, while they consider matrices with i.i.d. entries, there are dependencies in the matrices generated by our model, caused by the fact that they can only have 1s in positions corresponding to edges of the communication graph, which is unchanged from one round to the next. Hence, we need to adapt the analysis accordingly.

Recall that the kernel of an $m\times n$ $\gf_2$-valued matrix $\matr{A}$ is defined as 
\begin{equation} \label{eq:kernel_def}
    \mbox{ker}(\matr{A}) = \{ \boldx \in \gf_2^n: \matr{A} \boldx = \boldzero \}.
\end{equation}
We use $\boldzero$ to denote a vector all of whose elements are the additive identity in $\gf_2$, and refer to it as the zero vector.

The $m\times n$ matrix $\matr{A}$ has rank $n$ if and only if $\mbox{ker}(\matr{A}) =\{ \boldzero \}$. If $\matr{A}$ is a random matrix, then we can obtain a lower bound on the probability that it has rank $n$ by upper bounding the probability that there is a non-zero vector in the kernel; such an upper bound can be computed by taking the union bound on $\prob(\boldx \notin \mbox{ker}(\matr{A}))$ over all non-zero $\boldx \in \gf_2^n$, which is a finite set. We now apply this to the matrices $\matr{A}(t,v)$.

Fix $i,v\in V$. Let $\matr{R}_i(t,v)$ denote the $t\times n$ sub-matrix of $\matr{A}(t,v)$ composed of rows corresponding to broadcasts by node $i\in N^{\rm in}_v$. If $j\notin N^{\rm in}_i$, then column $j$ of $\matr{R}_i(t,v)$ has to be zero, since $i$ cannot use $j$'s packet as part of the linear combination it transmits. If $j\in N^{\rm in}_i$, then, by Algorithm RLNC($\beta$), the elements of column $j$ are independent $Bern(\pi_i)$ random variables, where $\pi_i= \beta \log (d^{\rm in}_i)/d^{\rm in}_i$. Finally, the sub-matrices $\matr{R}_i(t,v)$ of $\matr{A}(t,v)$ corresponding to distinct $i$ are mutually independent. Thus, we have fully specified the probability law of the random matrix $\matr{A}(t,v)$, given the communication graph. We now study its rank.

\begin{lemma}\label{lem:submat_kernel}
Fix $m,n\in \nats$, $p\in (0,1)$ and $\pi \in (0,1/2)$. Let $\matr{R} \in \gf_2^{m\times n}$ be a random matrix with mutually independent columns, constructed as follows: each column of $\matr{R}$ is zero with probability $1-p$; with the residual probability $p$, it is composed of i.i.d. $Bern(\pi)$ entries.

Fix $k\in \nats$, $1\leq k\leq n$, and a vector $\boldx_k \in \gf_2^n$ consisting of $k$ ones and $n-k$ zeros. Then, 
$$
\prob(\matr{R}\boldx_k = \boldzero) = \sum_{s=0}^k \binom{k}{s}p^s (1-p)^{k-s} \Bigl( \frac{1+(1-2\pi)^s}{2} \Bigr)^m.
$$
\end{lemma}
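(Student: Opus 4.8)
The plan is to reduce the computation to a standard parity identity by conditioning on how many of the relevant columns are of ``random type''. Since $\boldx_k$ has exactly $k$ nonzero entries, the vector $\matr{R}\boldx_k$ depends only on the $k$ columns of $\matr{R}$ indexed by the support of $\boldx_k$; indeed $\matr{R}\boldx_k$ is exactly the $\gf_2$-sum of those $k$ columns. By hypothesis these $k$ columns are i.i.d.: each is the zero vector with probability $1-p$, and otherwise has i.i.d.\ $Bern(\pi)$ entries.

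First I would let $S$ denote the number of these $k$ columns that are of the second (``random'') type, so that $S\sim Bin(k,p)$, and condition on the event $\{S=s\}$. Given this event, $\matr{R}\boldx_k$ is distributed as the $\gf_2$-sum of $s$ independent vectors with i.i.d.\ $Bern(\pi)$ entries, the remaining $k-s$ columns contributing nothing. Because the entries within each such column are independent across the $m$ rows, the $m$ coordinates of this sum are mutually independent, so it suffices to compute the probability that a single coordinate vanishes and raise it to the $m$th power.

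The key step is the elementary identity that if $Y_1,\dots,Y_s$ are i.i.d.\ $Bern(\pi)$ over $\gf_2$, then $\prob\bigl(\sum_{i=1}^s Y_i = 0\bigr) = \tfrac{1}{2}\bigl(1+(1-2\pi)^s\bigr)$. This follows by writing $\E\bigl[(-1)^{\sum_i Y_i}\bigr] = \prod_i \E\bigl[(-1)^{Y_i}\bigr] = (1-2\pi)^s$ and observing that the left-hand side equals $2\,\prob\bigl(\sum_i Y_i = 0\bigr) - 1$; for $s=0$ the empty sum is $0$ and the formula correctly returns $1$. Hence $\prob(\matr{R}\boldx_k = \boldzero \mid S=s) = \bigl(\tfrac{1+(1-2\pi)^s}{2}\bigr)^m$, and averaging over the $Bin(k,p)$ law of $S$ by the law of total probability yields exactly the claimed expression.

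I expect no serious obstacle here; the only point requiring care is the bookkeeping distinction between a column being ``of random type'' and a column happening to equal the zero vector — only the former governs the conditioning on $S$, and the parity identity automatically accounts for the latter. Alternatively, one could prove the parity identity, and hence the lemma, by a one-line induction on $s$, avoiding even the characters-of-$\gf_2$ argument.
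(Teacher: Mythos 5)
Your proposal is correct and follows essentially the same route as the paper: condition on the number $S\sim Bin(k,p)$ of ``random-type'' columns in the support of $\boldx_k$, use independence across the $m$ rows, and apply the parity formula $\prob(\sum_{i=1}^s Y_i=0)=\tfrac{1}{2}(1+(1-2\pi)^s)$ (which the paper proves by the same recursion/induction you mention as an alternative, while you use the equivalent $\E[(-1)^{\sum_i Y_i}]$ argument). No gaps; the bookkeeping point you flag about ``random type'' versus ``happens to be zero'' is exactly the distinction the paper's conditioning also relies on.
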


\begin{proof}
By symmetry, the probability of the event of interest does not depend on which $k$ elements of $\boldx_k$ are non-zero.
The event $\matr{R}\boldx_k=\boldzero$ is the event that the sum of the corresponding $k$ columns of $\matr{R}$ is zero. 

Now, each of these columns is zero with probability $1-p$. Let $k-S$ denote the random number of such columns, and note that $S$ has a $Bin(k,p)$ distribution. Then, $\matr{R}\boldx_k=\boldzero$ if and only if the sum of the $S$ columns of $\matr{R}$ which are not forced to be zero is the zero vector. By conditioning on the possible values of $S$, and using Lemma~\ref{lem:iid_rank} below, we thus obtain that 
\begin{align*}
    \prob(\matr{R}\boldx_k=\boldzero) 
    &= \sum_{s=0}^k \prob(S=s) \prob(\matr{R}\boldx_k=\boldzero | S=s) \\
    &= \sum_{s=0}^k \binom{k}{s}p^s (1-p)^{k-s} \Bigl( \frac{1+(1-2\pi)^s}{2} \Bigr)^m
\end{align*}
This establishes the claim of the lemma.
\end{proof}

\begin{lemma} \label{lem:iid_rank}
Let $\pi\in (0,1/2)$, and let $\matr{B}\in \gf_2^{m\times s}$ be a random $m\times s$ matrix whose elements are i.i.d. $Bern(\pi)$ $\gf_2$-valued random variables. Let $\boldsymbol{e}\in \gf_2^s$ denote the all-one column vector. Then, 
$$
\prob(\matr{B}\boldsymbol{e}=\boldzero) = P(s,\pi)^m, 
\mbox{ where } P(s,\pi) = \frac{1+(1-2\pi)^s}{2}.
$$
\end{lemma}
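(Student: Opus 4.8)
The plan is to reduce the claim to a one-dimensional parity computation by exploiting the independence of the rows of $\matr{B}$. Write $\matr{B}\boldsymbol{e} = (Y_1,\ldots,Y_m)^{\top}$, where $Y_i\in\gf_2$ is the sum over $\gf_2$ of the entries in the $i$-th row of $\matr{B}$. Since the rows of $\matr{B}$ are i.i.d., the $Y_i$ are i.i.d., and so
\[
\prob(\matr{B}\boldsymbol{e}=\boldzero) = \prod_{i=1}^m \prob(Y_i=0) = \bigl(\prob(Y_1=0)\bigr)^m.
\]
It therefore suffices to show $\prob(Y_1=0) = P(s,\pi) = \tfrac{1+(1-2\pi)^s}{2}$.

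Next I would pass from $\gf_2$ arithmetic to integer arithmetic. Let $Z_1,\ldots,Z_s$ denote the entries of the first row of $\matr{B}$, viewed as $\{0,1\}$-valued integer random variables that are i.i.d.\ $Bern(\pi)$, and set $Z = \sum_{j=1}^s Z_j\in\nats$. Then $Y_1 = 0$ in $\gf_2$ exactly when $Z$ is even. To compute $\prob(Z \text{ even})$ I would use the elementary fact that $\tfrac{1}{2}\bigl(1+(-1)^Z\bigr)$ equals $1$ when $Z$ is even and $0$ when $Z$ is odd, so that
\[
\prob(Z \text{ even}) = \tfrac{1}{2}\bigl(1 + \E[(-1)^Z]\bigr),
\]
and then factor the expectation by independence: $\E[(-1)^Z] = \prod_{j=1}^s \E[(-1)^{Z_j}] = \prod_{j=1}^s\bigl((1-\pi)-\pi\bigr) = (1-2\pi)^s$. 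Substituting into the previous display gives $\prob(Z \text{ even}) = \tfrac{1+(1-2\pi)^s}{2} = P(s,\pi)$, and combining this with the first display proves the lemma. (An equivalent route is induction on $s$: setting $q_s = \prob(Z \text{ even})$, one has $q_0 = 1$ and the recursion $q_s = (1-\pi)q_{s-1} + \pi(1-q_{s-1}) = \pi + (1-2\pi)q_{s-1}$, whose solution is $q_s = \tfrac12 + \tfrac12(1-2\pi)^s$.)

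I do not expect a genuine obstacle here; the argument is short and self-contained. The only points that need a little care are keeping the $\gf_2$-valued quantity $Y_1$ notationally distinct from the integer sum $Z$ used to define parity, and observing that the hypothesis $\pi\in(0,1/2)$ plays no role in the identity itself — it is only needed downstream, to ensure $P(s,\pi)\in(\tfrac12,1]$ so that the resulting bounds on kernel probabilities are nontrivial.
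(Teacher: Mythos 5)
Your proof is correct and follows essentially the same route as the paper: both reduce the claim, via independence of the coordinates of $\matr{B}\boldsymbol{e}$, to showing that the parity of $s$ i.i.d.\ $Bern(\pi)$ variables is even with probability $\tfrac{1+(1-2\pi)^s}{2}$. The only cosmetic difference is that you obtain the closed form directly from $\prob(Z\ \text{even}) = \tfrac12\bigl(1+\E[(-1)^Z]\bigr)$ and $\E[(-1)^Z]=(1-2\pi)^s$, whereas the paper sets up the recursion $P(s,\pi)=(1-\pi)P(s-1,\pi)+\pi(1-P(s-1,\pi))$ and verifies the solution — a route you already note parenthetically as equivalent; your remark that $\pi\in(0,1/2)$ is not needed for the identity itself is also accurate.
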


\begin{proof}
The elements of the vector $\matr{B}\boldsymbol{e}$ are mutually independent $\gf_2$-valued random variables. Thus, it suffices to show that each element is zero with probability $P(s,\pi)$. 

Each element of $\matr{B}\boldsymbol{e}$ is the sum, in $\gf_2$, of $s$ independent $Bern(\pi)$ random variables. Hence, we obtain the recursion 
$$
P(0,\pi)=1, \quad P(s,\pi) = (1-\pi)P(s-1,\pi)+\pi(1-P(s-1,\pi)).
$$
It is readily verified that the expression given in the statement of the lemma solves this recursion.
\end{proof}

Lemma~\ref{lem:submat_kernel} gives an exact expression for the probability that a vector with $k$ ones is in the kernel of a random matrix $\matr{R}$. We now use this to derive inequalities that are more convenient to work with.

\begin{lemma} \label{lem:submat_kernel_ineq}
Let $m, p,\pi,\matr{R}$ be as in the statement of Lemma~\ref{lem:submat_kernel}, and let $\boldx_k$ denote a $\gf_2$-valued vector with exactly $k$ ones. Then, 
$$
\prob(\matr{R}\boldx_k=\boldzero) \leq 2^{-m} \Bigl[ 1 + \exp(-kp\pi) + 2 \exp \Bigl(-\frac{k}{m}\kl \Bigl( \frac{p}{2};p \Bigr) \Bigr) \Bigr]^m.
$$
In addition,
$$
\prob(\matr{R}\boldx_k=\boldzero) \leq e^{-kmp\pi/4}, 
$$ 
for all $k\leq k^* = \max \bigl\{ k: (1-2\pi)^k \geq \sfrac{1}{2} \mbox{ and } \bigl( 1-\frac{\pi k}{2} \bigr)^m \geq \sfrac{1}{2} \bigr\}.
$
\end{lemma}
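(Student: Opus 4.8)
The plan is to reduce both inequalities to the exact formula of Lemma~\ref{lem:submat_kernel}. Writing $a = 1-2\pi \in (0,1)$ and using $\bigl(\tfrac{1+(1-2\pi)^s}{2}\bigr)^m = 2^{-m}(1+a^s)^m$, that formula becomes
\[
\prob(\matr{R}\boldx_k = \boldzero) = 2^{-m}\,\E\bigl[(1+a^S)^m\bigr], \qquad S \sim Bin(k,p),
\]
so everything reduces to estimating $\E[(1+a^S)^m]$.

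For the first inequality I would split according to whether $S$ is typical. On the event $\{S \ge kp/2\}$, monotonicity of $x\mapsto a^x$ and $1-x \le e^{-x}$ give $a^S \le a^{kp/2} = (1-2\pi)^{kp/2} \le e^{-kp\pi}$, so $(1+a^S)^m \le (1+e^{-kp\pi})^m$; on the complement $\{S < kp/2\}$ I would bound $(1+a^S)^m$ crudely by $2^m$ and control $\prob(S<kp/2) \le e^{-k\kl(p/2;p)}$ via Lemma~\ref{lem:binom_ldp} with $q=p/2<p$. Adding the two contributions,
\[
\E[(1+a^S)^m] \le \bigl(1+e^{-kp\pi}\bigr)^m + 2^m e^{-k\kl(p/2;p)} = u^m + v^m,
\]
with $u = 1+e^{-kp\pi}$ and $v = 2e^{-\frac{k}{m}\kl(p/2;p)}$ (so that $2^m e^{-k\kl(p/2;p)} = v^m$). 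The step that rebuilds an $m$-th power in the claimed form is the elementary superadditivity $u^m+v^m \le (u+v)^m$ for $u,v\ge0$ (expand the binomial); multiplying through by $2^{-m}$ yields the first inequality.

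For the second inequality I would instead work directly with the sum in Lemma~\ref{lem:submat_kernel}. The key estimate is that $1-(1-2\pi)^s \ge \pi s$ for $0\le s\le k\le k^*$; for $s\ge1$ this follows from
\[
1-(1-2\pi)^s = 2\pi\sum_{j=0}^{s-1}(1-2\pi)^j \ge 2\pi s\,(1-2\pi)^{s-1} \ge 2\pi s\,(1-2\pi)^{k^*} \ge \pi s,
\]
the last step being the first defining clause of $k^*$, namely $(1-2\pi)^{k^*}\ge\sfrac{1}{2}$ (and the $s=0$ case is trivial). Hence $\bigl(\tfrac{1+(1-2\pi)^s}{2}\bigr)^m \le \bigl(1-\tfrac{\pi s}{2}\bigr)^m \le e^{-m\pi s/2}$ (the middle quantity is $\ge\sfrac{1}{2}>0$, so raising to the $m$-th power is harmless), and substituting into Lemma~\ref{lem:submat_kernel} and collapsing the binomial series,
\[
\prob(\matr{R}\boldx_k = \boldzero) \le \sum_{s=0}^{k}\binom{k}{s}\bigl(pe^{-m\pi/2}\bigr)^s(1-p)^{k-s} = \bigl(1-p(1-e^{-m\pi/2})\bigr)^k \le e^{-kp(1-e^{-m\pi/2})}.
\]
It then remains to show $1-e^{-m\pi/2}\ge m\pi/4$: from the second defining clause $(1-\pi k/2)^m\ge\sfrac{1}{2}$ and $\log\frac{1}{1-x}\ge x$ one gets $m\pi k/2\le\log 2$, hence $m\pi/2\le\log 2$ when $k\ge1$ (the case $k=0$ being trivial), and on $[0,\log 2]$ the elementary inequality $1-e^{-y}\ge y/2$ holds; combining these gives $1-e^{-m\pi/2}\ge m\pi/4$, so the right-hand side above is at most $e^{-kmp\pi/4}$.

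The calculations are routine; the two points worth flagging are the superadditivity trick that re-assembles an $m$-th power in the first estimate, and the fact that the two clauses defining $k^*$ serve genuinely different roles in the second --- $(1-2\pi)^{k^*}\ge\sfrac{1}{2}$ controls the per-row ``collision'' factor $\tfrac{1+(1-2\pi)^s}{2}$, whereas $(1-\pi k/2)^m\ge\sfrac{1}{2}$ keeps the exponent $m\pi$ bounded, which is exactly what lets one replace $1-e^{-m\pi/2}$ by a constant multiple of $m\pi$. I expect untangling these two uses to be the only mildly delicate part.
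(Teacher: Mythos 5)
Your proof is correct. For the first inequality it is essentially the paper's argument in different notation: split the binomial sum (equivalently, the expectation over $S\sim Bin(k,p)$) at $s=kp/2$, bound the atypical part by $\prob(Bin(k,p)<kp/2)\le e^{-k\kl(p/2;p)}$ via Lemma~\ref{lem:binom_ldp}, bound the typical part via $(1-2\pi)^{kp/2}\le e^{-kp\pi}$, and reassemble with $u^m+v^m\le(u+v)^m$ --- exactly the paper's steps. For the second inequality you reach the same intermediate bound $\bigl(\tfrac{1+(1-2\pi)^s}{2}\bigr)^m\le\bigl(1-\tfrac{\pi s}{2}\bigr)^m$ as the paper, though you derive $1-(1-2\pi)^s\ge\pi s$ from a geometric-series estimate and the clause $(1-2\pi)^{k^*}\ge\sfrac{1}{2}$, whereas the paper proves the same linearization $(1-2x)^s\le 1-xs$ (valid while $(1-2x)^s\ge\sfrac{1}{2}$) by a derivative argument. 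From there the routes diverge: the paper applies that same linearization a second time, using the clause $(1-\pi k/2)^m\ge\sfrac{1}{2}$, to get $\bigl(1-\tfrac{\pi s}{2}\bigr)^m\le 1-\tfrac{m\pi}{4}s$ and then evaluates the binomial mean exactly, giving $1-\tfrac{kmp\pi}{4}\le e^{-kmp\pi/4}$; you instead use $1-x\le e^{-x}$, collapse the binomial generating function to $\bigl(1-p(1-e^{-m\pi/2})\bigr)^k$, and recover the factor $m\pi/4$ by extracting $m\pi/2\le\log 2$ from the second clause and using $1-e^{-y}\ge y/2$ on $[0,\log 2]$. Both are valid and of comparable length; the paper's double-linearization is slightly more economical, while your version makes the distinct roles of the two clauses defining $k^*$ more transparent, as you note.
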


\begin{proof}
We obtain from Lemma~\ref{lem:submat_kernel} and the fact that $(1-2\pi)^s$ is a decreasing function of $s$ that 
\begin{equation}
    \label{eq:kernelprob_ineq1}
    \begin{aligned}
    \prob(\matr{R}\boldx_k=\boldzero) \leq 
    &\sum_{s=0}^{\lceil kp/2 \rceil-1} \binom{k}{s}p^s(1-p)^{k-s} \\
    & + \Bigl( \frac{1 + (1-2\pi)^{\lceil kp/2 \rceil}}{2} \Bigr)^m \sum_{s=\lceil kp/2 \rceil}^k \binom{k}{s}p^s(1-p)^{k-s} \\
    \leq & \prob \Bigl( Bin(k,p)\leq \frac{kp}{2} \Bigr) + \Bigl( \frac{1 + (1-2\pi)^{kp/2}}{2} \Bigr)^m.
    \end{aligned}
\end{equation}
The first term in the above sum is bounded by $\exp(-k \kl(p/2;p))$ by Lemma~\ref{lem:binom_ldp}, while $(1-2\pi)^{kp/2} \leq \exp(-kp\pi)$ since $e^{-x}\geq 1-x$. Hence, we obtain from (\ref{eq:kernelprob_ineq1}) that 
\begin{align*}
    \prob(\matr{R}\boldx_k=\boldzero) &\leq \Bigl[ \exp \Bigl(-\frac{k}{m}\kl \bigl( \frac{p}{2};p \bigr) \Bigr) \Bigr]^m + \Bigl( \frac{1 + \exp(-kp\pi)}{2} \Bigr)^m \\
    &\leq 2^{-m} \Bigl( 1 + \exp(-kp\pi) + 2 \exp \Bigl(-\frac{k}{m}\kl \bigl( \frac{p}{2};p \bigr) \Bigr)^m.
\end{align*}
The last inequality follows from the fact that $(a+b)^m \geq a^m+b^m$ for any $a,b \geq 0$ and any $m\in \nats$.

Thus, we have established the first claim of the lemma.
The proof of the second claim will use the inequality
\begin{equation} \label{eq:log_ineq}
(1-2x)^s \leq 1-xs, \quad \forall \; x\in (0,1/2), s\geq 0: (1-2x)^s \geq \frac{1}{2}.
\end{equation}
This is readily verified by noting that for any fixed $x\in (0,1/2)$, the function $g(s)=(1-2x)^s-1+xs$ satisfies $g(0)=0$ and $g'(s)< 0$ for all $s\in (0,\frac{\log 2}{-\log(1-2x)}$.

Hence, $(1-2\pi)^s \leq 1-\pi s$ for all $s\leq k\leq k^*$, and we obtain from Lemma~\ref{lem:submat_kernel} that 
\begin{align*}
    \prob(\matr{R}\boldx_k=\boldzero) 
    &\leq \sum_{s=0}^k \binom{k}{s}p^s (1-p)^{k-s} \Bigl( 1-\frac{\pi s}{2} \Bigr)^m \\
    &\leq \sum_{s=0}^k \binom{k}{s}p^s (1-p)^{k-s} \Bigl(1-\frac{m\pi}{4}s \Bigr) \\
    &= 1-\frac{kmp\pi}{4} \leq e^{-kmp\pi/4}.
\end{align*}
The second inequality is obtained by invoking (\ref{eq:log_ineq}) again, while the last inequality uses the elementary inequality $e^{-x}\geq 1-x$. This completes the proof of the lemma.
\end{proof}

\begin{lemma} \label{lem:kernel_bds}
Consider a sequence of directed \ER communication graphs $G(n,p)$ indexed by $n$, with given $p>0$. Fix $\delta>0$ and suppose that for each $n$, $G(n,p)$ belongs to the set $\mathcal{E}^1_{\delta}$ defined in Lemma~\ref{lem:degree_conc}.

Fix $v\in V$ and set $t=\lceil \frac{1}{p} \rceil +1$. Let $\matr{A}(t,v)$ denote the matrix of coefficients received by vertex $v$ in rounds $2,3,\ldots,t+1$ when Algorithm RLNC($\beta$) is employed. Let $\boldx_k \in \gf_2^n$ denote a vector with exactly $k$ ones.

Suppose $\beta>8$, and set $\gamma=1+(\beta/8)>2$. Let $\alpha>0$ be sufficiently small that 
$$
\Bigl( 1-\frac{\alpha \beta}{2p(1-\delta)} \Bigr)^t > \frac{1}{2} \mbox{ and } \exp \Bigl( -\frac{\alpha \beta}{ 2(1-\delta)} \Bigr) >\frac{1}{2}.
$$
Then, for all $n$ sufficiently large, we have the following:
$$
\prob(\matr{A}(t,v)\boldx_k=\boldzero) \leq \begin{cases}
  n^{-\gamma k}, & 1\leq k\leq \lfloor \frac{\alpha n}{\log n} \rfloor, \\
  \left( \frac{1+\exp(-\beta k\frac{\log n}{n})}{2} \right)^{(1-\delta)(1+p)n}, & \lceil \frac{\alpha n}{\log n} \rceil \leq k \leq n.
\end{cases}
$$
\end{lemma}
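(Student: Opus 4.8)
The plan is to exploit the block structure of $\matr{A}(t,v)$. Writing $\mathrm{supp}(\boldx_k)$ for the set of $k$ coordinates on which $\boldx_k$ equals $1$, the event $\matr{A}(t,v)\boldx_k=\boldzero$ is the intersection over $i\in N^{\rm in}_v$ of the events $\matr{R}_i(t,v)\boldx_k=\boldzero$, and, as noted just before the lemma, the blocks $\matr{R}_i(t,v)$ are conditionally independent given the communication graph $G$. The obstacle with conditioning on all of $G$ is that this freezes which columns of each $\matr{R}_i(t,v)$ are zero, so that Lemmas~\ref{lem:submat_kernel}--\ref{lem:submat_kernel_ineq} (stated for matrices with i.i.d.\ columns) do not apply verbatim. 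I would therefore reveal the edges of $G$ in stages: first the edges into $v$, which fix $N^{\rm in}_v$ and hence $d^{\rm in}_v$, and then, for each $i\in N^{\rm in}_v$ in turn, the edges into $i$. Since the edge sets pointing into distinct vertices are disjoint, the pairs $(N^{\rm in}_i,d^{\rm in}_i)$, $i\in N^{\rm in}_v$, are mutually independent given $N^{\rm in}_v$, with $d^{\rm in}_i\sim Bin(n-1,p)$ and $s_i:=|N^{\rm in}_i\cap\mathrm{supp}(\boldx_k)|\sim Bin(k',p)$ for some $k'\in\{k-1,k\}$. By Lemma~\ref{lem:iid_rank}, $\prob(\matr{R}_i(t,v)\boldx_k=\boldzero\mid N^{\rm in}_i,d^{\rm in}_i)=P(s_i,\pi_i)^t$ with $\pi_i=\beta\log d^{\rm in}_i/d^{\rm in}_i$. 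On $\mathcal{E}^1_\delta$ every $d^{\rm in}_i$ lies in $((1-\delta)np,(1+\delta)np)$, so $\pi_i\ge\pi^-:=\beta\log((1+\delta)np)/((1+\delta)np)$; using that $P(s,\pi)$ is non-increasing in $\pi$, keeping the degree indicator attached until after this monotonicity step and then discarding it since $P\ge0$, and finally $d^{\rm in}_v\ge(1-\delta)np$ with $t\ge 1/p$, I would obtain
$$\prob(\matr{A}(t,v)\boldx_k=\boldzero)\ \le\ q^{(1-\delta)np},\qquad q:=\E\bigl[P(Bin(k',p),\pi^-)^t\bigr].$$

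The point of reducing to $q$ is that, by Lemma~\ref{lem:submat_kernel}, $q$ is \emph{exactly} $\prob(\matr{R}\boldx_{k'}=\boldzero)$ for the i.i.d.-column matrix $\matr{R}$ of that lemma with $m=t$, column-nonzero probability $p$, and Bernoulli parameter $\pi^-$; hence Lemma~\ref{lem:submat_kernel_ineq} applies off the shelf. For $1\le k\le\lfloor\alpha n/\log n\rfloor$ the hypotheses imposed on $\alpha$ are precisely what is needed to guarantee $k'\le k^*$ (the quantities $\tfrac{\alpha\beta}{2p(1-\delta)}$ and $\tfrac{\alpha\beta}{2(1-\delta)}$ bound $\tfrac12\pi^-k$ and, via $\pi^-\le\beta\log n/((1-\delta)np)$, the relevant power of $1-2\pi^-$), so the second inequality of Lemma~\ref{lem:submat_kernel_ineq} gives $q\le e^{-k'tp\pi^-/4}$; substituting $t\ge 1/p$ and $\pi^-=(1+o(1))\beta\log n/((1+\delta)np)$ yields $q^{(1-\delta)np}\le n^{-(1-o(1))(1-\delta)\beta k/(4(1+\delta))}$, which is at most $n^{-\gamma k}$ for all large $n$ because $\beta>8$ forces $\tfrac\beta4\ge 1+\tfrac\beta8=\gamma$, with room to spare for small $\delta$. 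For $\lceil\alpha n/\log n\rceil\le k\le n$ I would instead use the first inequality of Lemma~\ref{lem:submat_kernel_ineq}: now $k'/t=\Theta(n/\log n)$, so the term $2\exp(-\tfrac{k'}{t}\kl(\tfrac p2;p))$ is $e^{-\Theta(n/\log n)}$ and negligible, giving $q\le(1+o(1))\bigl(\tfrac{1+\exp(-k'p\pi^-)}{2}\bigr)^t$; since $k'p\pi^-$ is of order $\beta k\log n/n$ and $t\,d^{\rm in}_v\ge(1-\delta)(1+p)n$ on $\mathcal{E}^1_\delta$ (this is where $t=\lceil1/p\rceil+1$ is used), raising to the power $d^{\rm in}_v$ and absorbing the $\bigl(1+e^{-\Theta(n/\log n)}\bigr)^{np}=e^{o(1)}$ factor into the slack of the exponent gives a bound of the stated form.

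The step I expect to be the real work is the decoupling just described. Unlike the i.i.d.\ matrices of Bl\"omer \emph{et al.}~\cite{apr19blomer97}, here the same graph governs all $t$ rounds, and within a block the set of nonzero columns is tied to $d^{\rm in}_i$, which also sets $\pi_i$; the block decomposition restores conditional independence \emph{across} in-neighbours, while revealing the edges into $v$ before those into its in-neighbours lets one integrate out the ``which columns vanish'' randomness block by block and match each block to the i.i.d.-column model of Lemma~\ref{lem:submat_kernel}. A secondary subtlety lives near $k\approx\alpha n/\log n$: one cannot simply condition on a high-probability event forcing every $s_i$ close to its mean $kp$, because the failure probability of such an event is only $e^{-\Theta(n/\log n)}$, far larger than the target $\bigl(\tfrac{1+e^{-\alpha\beta}}{2}\bigr)^{\Theta(n)}$; one is therefore forced to retain the expectation over the neighbourhood randomness throughout and carry the resulting lower-order terms to the end, as above.
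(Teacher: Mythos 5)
Your proposal follows essentially the same route as the paper's proof: decompose $\matr{A}(t,v)$ into the blocks $\matr{R}_i(t,v)$, $i\in N^{\rm in}_v$, reduce each block to the i.i.d.-column model of Lemmas~\ref{lem:submat_kernel}--\ref{lem:submat_kernel_ineq}, raise the single-block bound to the power $d^{\rm in}_v\ge(1-\delta)np$, and split at $k\approx\alpha n/\log n$, using the second inequality of Lemma~\ref{lem:submat_kernel_ineq} (with the $\alpha$-conditions certifying $k\le k^*$) below the threshold and the first inequality above it. Your staged revelation of the edges (into $v$ first, then into each $i\in N^{\rm in}_v$) is a more careful justification of the step the paper compresses into ``$\matr{R}_u(t,v)$ has the same probability law as $\matr{R}$'', and the $k'\in\{k-1,k\}$ bookkeeping for the missing self-column is a genuine refinement (though you should note that at most $k$ of the blocks have $k'=k-1$, so the $k=1$ case is not lost).

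The one place where your write-up does not deliver what it claims is the constant tracking forced by your (correct) choice $\pi^-=\beta\log((1+\delta)np)/((1+\delta)np)$. Since $x\mapsto\log x/x$ is decreasing, the binding constraint for a lower bound on $\pi_i$ is the degree \emph{upper} bound $(1+\delta)np$; the paper's own proof substitutes $(1-\delta)np$ here, which is the wrong monotonicity direction, and the exact constants in the lemma statement reflect that slip. With the correct $\pi^-$, your small-$k$ exponent is of order $\frac{(1-\delta)\beta}{4(1+\delta)}\,k\log\bigl((1+\delta)np\bigr)$, which dominates $\gamma k\log n$ only when $\delta$ is small relative to $\beta-8$, whereas the lemma fixes an arbitrary $\delta>0$; and in the large-$k$ regime the inequality you need, $k'p\pi^-\ge\beta k\log n/n$, amounts to $\log\bigl((1+\delta)np\bigr)\ge(1+\delta)(k/k')\log n$, which fails for every fixed $\delta>0$ once $n$ is large. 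So you obtain the stated form only with $\beta$ replaced by roughly $\beta/(1+\delta)$, and with no remaining slack to absorb the $(1+o(1))$ prefactor coming from the $\kl(p/2;p)$ term (the paper absorbs it precisely using the spurious slack its $(1-\delta)$ substitution creates). This is a quantitative rather than structural defect: the weakened bounds, with $\delta$ small and $\beta/(1+\delta)$ in place of $\beta$, are all that the proof of Theorem~\ref{thm:coding} actually uses, since there $\delta$ is chosen small and only the qualitative form of the bound matters. But as a proof of the lemma exactly as stated, the large-$k$ case does not close as you have written it, and you should either weaken the constant in the conclusion or restrict $\delta$.
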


\begin{proof}
Let $\matr{R}_u(t,v)$ denote the matrix of coefficients received by $v$ from $u\in N^{\rm in}_v$ in time slots $2,\ldots,t+1$. Then, 
$$
\matr{A}(t,v) = \begin{pmatrix}
\matr{R}_1(t,v)^T & | & \ldots & | & \matr{R}_{d^{\rm in}_v}(t,v)^T
\end{pmatrix}^T
$$
up to a permutation of the rows, and it is clear that $\matr{A}(t,v) \boldx_k = \boldzero$ if and only if $\matr{R}_u(t,v) \boldx_k=\boldzero$ for every in-neighbour $u$ of $v$. Moreover, the matrices $\matr{R}_u(t,v)$, $u\in N^{\rm in}_v$, are i.i.d. 
Hence, 
$$
\prob(\matr{A}(t,v)\boldx_k=\boldzero) = \prod_{w\in N^{\rm in}_v} \prob(\matr{R}_w(t,v)\boldx_k = \boldzero) = \prob(\matr{R}_u(t,v) \boldx_k=\boldzero)^{d^{\rm in}_v},
$$
for an arbitrary $u\in N^{\rm in}_v$. As $d^{\rm in}_v \geq (1-\delta)np$ by the assumption that the event $\mathcal{E}^1_{\delta}$ occurs, it follows that  
\begin{equation} \label{eq:kernel_product_ineq}
    \prob(\matr{A}(t,v)\boldx_k=\boldzero) \leq \prob(\matr{R}_u(t,v)\boldx_k=\boldzero)^{ (1-\delta)np}.
\end{equation}

Next, the matrix $\matr{R}_u(t,v)$ has the same probability law as the matrix $\matr{R}$ in Lemmas~\ref{lem:submat_kernel} and \ref{lem:submat_kernel_ineq}, with $m=t$ and $\pi=\beta \log(d^{\rm in}_u)/d^{\rm in}_{u}$. Hence, the inequalities in Lemma~\ref{lem:submat_kernel_ineq} apply to $\prob(\matr{R}_u(t,v)\boldx_k=\boldzero)$. Noting that $d^{\rm in}_u \geq (1-\delta)np$ on the event $\mathcal{E}^1_{\delta}$, and that the bounds in Lemma~\ref{lem:submat_kernel_ineq} are a decreasing function of $\pi$, we infer that 
\begin{equation} \label{eq:kernelu_bd_allk}
\begin{aligned}
&\prob(\matr{R}_u(t,v)\boldx_k=\boldzero) \leq \\
& 2^{-t} \Bigl[ 1+ \exp\Bigl( -\beta k \frac{\log((1-\delta)np)}{ (1-\delta)n}\Bigr) + 2\exp\Bigl(-\frac{k}{t}\kl\Bigl( \frac{p}{2};p\Bigr) \Bigr) \Bigr]^t,
\end{aligned}
\end{equation}
for all $k\in \{1,\ldots,n\}$, and 
\begin{equation} \label{eq:kernelu_bd_smallk}
\begin{aligned}
&\prob(\matr{R}_u(t,v)\boldx_k=\boldzero) \leq \exp \Bigl( -\beta kt \frac{\log((1-\delta)np)}{4(1-\delta)n} \Bigr) \mbox{ for all } \\
&k: \Bigl( 1-2\beta \frac{ \log((1-\delta)np)}{(1-\delta)np} \Bigr)^k \geq \frac{1}{2} \mbox{ and } \Bigl( 1-\beta k \frac{\log((1-\delta)np)}{2(1-\delta)np} \Bigr)^t \geq \frac{1}{2}.
\end{aligned}
\end{equation}

Let $\alpha>0$ be as in the lemma. Then, 
\begin{equation} \label{eq:kcond1}
\Bigl( 1-2\beta\frac{\log(1-\delta)np}{(1-\delta)np} \Bigr)^{\frac{\alpha n}{\log n}} \sim \exp \Bigl( -\frac{2\alpha\beta}{(1-\delta)p} \Bigr) > \frac{1}{2}, 
\end{equation}
where the inequality holds by the assumptions of the lemma. We use the asymptotic notation $a_n\sim b_n$ to denote that $a_n/b_n$ tends to 1 as $n$ tends to infinity. Similarly,
\begin{equation}\label{eq:kcond2}
\Bigl( 1-\beta\frac{\alpha n}{\log n} \frac{\log( (1-\delta)np)}{2(1-\delta)np} \Bigr)^t \sim 
\Bigl( 1-\frac{\alpha \beta}{2p(1-\delta)} \Bigr)^t > \frac{1}{2}.
\end{equation}
It follows from (\ref{eq:kcond1}) and (\ref{eq:kcond2})  that, for all $n$ sufficiently large,
$$
    \Bigl( 1-2\beta\frac{\log(1-\delta)np}{(1-\delta)np} \Bigr)^k \geq \frac{1}{2} \mbox{ and } 
    \Bigl(1-\beta k\frac{\log((1-\delta)np)}{2 (1-\delta) np} \Bigr)^t \geq \frac{1}{2},
$$ 
if $k\leq {\frac{\alpha n}{\log n}}$.
Hence, we obtain from (\ref{eq:kernel_product_ineq}) and (\ref{eq:kernelu_bd_smallk}) that, for all $n$ sufficiently large and all $k\leq \frac{\alpha n}{\log n}$,
\begin{align*}
    \prob(\matr{A}(t,v)\boldx_k=\boldzero) &\leq
    \Bigl(  \exp \Bigl( -\beta kt \frac{ \log(1-\delta) np}{ 4(1-\delta)n } \Bigr) \Bigr)^{(1-\delta)np} \\
    &= \exp \Bigl( -\frac{\beta kpt \log((1-\delta)np)}{4} \Bigr) \\
    &\leq ((1-\delta)np)^{-\beta k/4} \leq n^{-\gamma k}.
\end{align*}
We have used the fact that $t=\lceil \frac{1}{p} \rceil+1$ to obtain the second inequality. The last inequality holds for all $n$ sufficiently large because $\beta$ was assumed to be larger than 8 and so $\gamma=1+(\beta/8)<\beta/4$.

We have thus established the first claim of the lemma.
Next, we turn to the second claim.
Substituting (\ref{eq:kernelu_bd_allk}) in (\ref{eq:kernel_product_ineq}) yields the inequality 
\begin{align*}
&\prob(\matr{A}(t,v)\boldx_k=\boldzero) \\
&\leq \Bigl[ \frac{ 1+\exp \bigl( -\beta k \frac{ \log((1-\delta) np)}{ (1-\delta)n}\bigr) + 2\exp \bigl(-\frac{k}{t}\kl\bigl( \frac{p}{2};pb\bigr) \bigr)}{2} \Bigr]^{(1-\delta)npt} \\
& \leq \Bigl[ \frac{1 + \exp \bigl( -\beta k \frac{\log((1-\delta)np)}{ (1-\delta)n}\bigr)}{2} \Bigr]^{(1-\delta)npt} \Bigl[ 1 + 2\exp\Bigl(-\frac{k}{t}\kl\Bigl( \frac{p}{2}; p\Bigr) \Bigr) \Bigr]^{(1-\delta)npt} \\
& \leq \Bigl[ \frac{1 + \exp \bigl( -\beta k \frac{\log((1-\delta)np)}{ (1-\delta)n} \bigr)}{2} \Bigr]^{(1-\delta)npt} \exp \Bigl[ 2(1-\delta)npt \exp \Bigl( -\frac{k}{t} \kl \Bigl( \frac{p}{2};p\Bigr) \Bigr) \Bigr].
\end{align*}
Since $t=\lceil \frac{1}{p} \rceil+1$ and $\kl(p/2;p)$ are fixed positive constants, it is easy to see that 
$$
npt \exp \Bigl( -\frac{k}{t} \kl \Bigl( \frac{p}{2};p\Bigr) \Bigr) = o(1) \quad \forall \; k\geq \frac{\alpha n}{\log n}.
$$
Furthermore, $npt \geq n(1+p)$. Hence, we obtain from the above that 
$$
\prob(\matr{A}(t,v)\boldx_k=\boldzero) 
\leq [1+o(1)] \Bigl[ \frac{1 + \exp\bigl( -\beta k \frac{\log((1-\delta)np)}{ (1-\delta)n} \bigr)}{2} \Bigr]^{(1-\delta)(1+p)n}.
$$
Since $\frac{\log((1-\delta)np}{(1-\delta)n}>\frac{\log n}{n}$ for all $n$ sufficiently large and fixed $\delta,p>0$, the second claim of the lemma is seen to hold.
\end{proof}

We are now ready to complete the proof of the main result about the number of rounds required by the $RLNC(\beta)$ algorithm.

\begin{proof}[Proof of Theorem~\ref{thm:coding}]
Observe that Algorithm $RLNC(\beta)$ fails to complete after $t=\lceil 1/p \rceil+2$ rounds only if there is some vertex $v$ for the which the matrix $\matr{A}(t,v)$ has a non-zero vector in its kernel, for $t=\lceil 1/p \rceil+1$. Here, $\matr{A}(t,v)$ is the matrix of coefficients received by vertex $v$ in rounds $2,\ldots,t+1$. In other words, there is some vertex $v$, some $k\geq 1$, and some vector $\boldx_k \in \gf_2^n$ with exactly $k$ 1s such that $\boldx_k \in \mbox{ker}(\matr{A}(t,v))$. As there are $\binom{n}{k}$ such vectors in $\gf_2^n$, each of which is equally likely to belong to $\mbox{ker}(\matr{A}(t,v))$, we obtain by the union bound that 
\begin{equation}
    \label{eq:completion_prob_bd}
    \begin{aligned}
    &\prob \Bigl( T^{\rm all}_n(RLNC(\beta)) > \lceil \frac{1}{p} \rceil+2 \Bigm| \mathcal{E}^1_{\delta} \Bigr) \\
    &\leq \prob \bigl( \exists \, v\in V, \boldx \neq \boldzero: \matr{A}(t,v)\boldx=\boldzero \bigm| \mathcal{E}^1_{\delta} \bigr) \\
    &\leq \sum_{v\in V} \sum_{k=1}^n \binom{n}{k} \prob( \matr{A}(t,v) \boldx_k = \boldzero | \mathcal{E}^1_{\delta} ), \quad t = \lceil \frac{1}{p} \rceil+1.
    \end{aligned}
\end{equation}
The event $\mathcal{E}^1_{\delta}$ was defined in Lemma~\ref{lem:degree_conc}.

We will now use Lemma~\ref{lem:kernel_bds} to bound the terms in the sum in the last line above. The constant $\delta>0$ appearing in its statement may be chosen arbitrarily small. We assume henceforth that it is chosen such that $(1-\delta)(1+p) > 1$. 

Let $\alpha>0$ be as in the statement of Lemma~\ref{lem:kernel_bds}. Then, it follows from the lemma that 
\begin{equation}
    \label{eq:sumbound_to_alpha}
    \sum_{k=1}^{\lfloor \frac{\alpha n}{\log n} \rfloor} \binom{n}{k} \prob( \matr{A}(t,v)\boldx_k = \boldzero | \mathcal{E}^1_{\delta}) 
    \leq \sum_{k=1}^{\lfloor \frac{\alpha n}{\log n} \rfloor} \frac{n^k}{k!} n^{-\gamma k} \leq \exp \bigl(n^{1-\gamma}\bigr) = o\Bigl( \frac{1}{n} \Bigr), 
\end{equation}
since $\gamma=1+(\beta/8)>2$.

Next, observe from Lemma~\ref{lem:kernel_bds} that, for $n$ sufficiently large and $k\geq \frac{\alpha n}{\log n}$, we have 
\begin{equation} \label{eq:kernelbd_alphaplus}
\prob(\matr{A}(t,v)\boldx_k=\boldzero | \mathcal{E}^1_{\delta})\leq \Bigl( \frac{1+\exp(-\alpha \beta)}{2} \Bigr)^{(1-\delta)(1+p)n}.
\end{equation}
Now, define 
$$
\theta = \inf \bigl\{ q\geq 0: \kl( q;\sfrac{1}{2} ) \leq \log \bigl( 1+e^{-\alpha \beta} \bigr) \bigr\}.
$$
The set over which the infimum is taken is non-empty, since $\kl(q;\sfrac{1}{2})$ decreases from $\log 2$ at $q=0$ to 0 at $q=\sfrac{1}{2}$. Hence, $\theta \in (0,\sfrac{1}{2})$, and $\kl(\theta;\sfrac{1}{2})=\log(1+\exp(-\alpha \beta))$ by the continuity of $\kl(\cdot;\sfrac{1}{2})$. Thus, it follows from (\ref{eq:kernelbd_alphaplus}) that 
\begin{align*}
    &\sum_{k=\lceil \frac{\alpha n}{\log n} \rceil}^{\lfloor \theta n \rfloor} \binom{n}{k} \prob(\matr{A}(t,v) \boldx_k=\boldzero | \mathcal{E}^1_{\delta}) \\
    &\leq \Bigl( \frac{1+\exp(-\alpha \beta)}{2} \Bigr)^{[(1-\delta)(1+p)-1]n} (1+\exp(-\alpha \beta))^n \sum_{k=\lceil \frac{\alpha n}{\log n} \rceil}^{\lfloor \theta n \rfloor} \binom{n}{k} 2^{-n} \\
    &\leq \Bigl( \frac{1+\exp(-\alpha \beta)}{2} \Bigr)^{[(1-\delta)(1+p)-1]n} (1+\exp(-\alpha \beta))^n \prob \bigl( Bin(n,\sfrac{1}{2})\leq \theta n \bigr) \\
    &\leq \Bigl( \frac{1+\exp(-\alpha \beta)}{2} \Bigr)^{[(1-\delta)(1+p)-1]n} (1+\exp(-\alpha \beta))^n e^{-n\kl(\theta,\sfrac{1}{2})}.
\end{align*}
We have used Lemma~\ref{lem:binom_ldp} to obtain the last inequality. Substituting $\kl(\theta;\sfrac{1}{2}) = \log(1+\exp(-\alpha \beta))$, we get 
\begin{equation}
    \label{eq:sumbound_to_theta}
    \begin{aligned}
    \sum_{k=\lceil \frac{\alpha n}{\log n} \rceil}^{\lfloor \theta n \rfloor} \binom{n}{k} \prob(\matr{A}(t,v) \boldx_k=\boldzero | \mathcal{E}^1_{\delta}) 
    &\leq \Bigl( \frac{1+\exp(-\alpha \beta)}{2} \Bigr)^{[(1-\delta)(1+p)-1]n} \\
    &= o\Bigl( \frac{1}{n} \Bigr),
    \end{aligned}
\end{equation}
since $\delta$ is chosen so that $(1-\delta)(1+p)>1$.

Next, observe from Lemma~\ref{lem:kernel_bds} that, for $n$ sufficiently large and $k\geq \theta n$, we have 
$$
\prob(\matr{A}(t,v)\boldx_k=\boldzero | \mathcal{E}^1_{\delta}) \leq \Bigl( \frac{1+n^{-\theta \beta}}{2} \Bigr)^{(1-\delta)(1+p)n}.
$$
Hence,
\begin{align*}
    &\sum_{k=\lceil \theta n \rceil}^{\lfloor n/8 \rfloor} \binom{n}{k} \prob(\matr{A}(t,v) \boldx_k=\boldzero | \mathcal{E}^1_{\delta}) \\ 
    &\leq \Bigl( \frac{1+n^{-\theta \beta}}{2} \Bigr)^{[(1-\delta)(1+p)-1]n} \bigl( 1+n^{-\theta \beta} \bigr)^n \sum_{k=\lceil \theta n \rceil}^{\lfloor n/8 \rfloor} \binom{n}{k} 2^{-n} \\
    &\leq \Bigl( \frac{1+n^{-\theta \beta}}{2} \Bigr)^{[(1-\delta)(1+p)-1]n} \exp \bigl( n^{1-\theta \beta} \bigr) \prob \Bigl( Bin(n,\sfrac{1}{2})\leq \frac{n}{8} \Bigr) \\
    &\leq \Bigl( \frac{1+n^{-\theta \beta}}{2} \Bigr)^{[(1-\delta)(1+p)-1]n} \exp \Bigl( n^{1-\theta \beta} - n\kl\Bigl( \frac{1}{8}; \frac{1}{2} \Bigr) \Bigr).
\end{align*}
We have used Lemma~\ref{lem:binom_ldp} to obtain the last inequality. Notice that the bound holds trivially if $\theta>1/8$, since an empty sum is zero by convention. Now, $(1-\delta)(1+p)>1$ by the choice of $\delta$, while $n^{1-\delta \theta}<n\kl(1/8;1/2)$ for all $n$ sufficiently large. Hence, it follows that 
\begin{equation}
    \label{eq:sumbound_to_eighth}
    \sum_{k=\lceil \theta n \rceil}^{\lfloor n/8 \rfloor} \binom{n}{k} \prob(\matr{A}(t,v) \boldx_k=\boldzero | \mathcal{E}^1_{\delta}) = o\Bigl( \frac{1}{n} \Bigr).
\end{equation}

Finally, for $k\geq n/8$ and sufficiently large $n$, we see from Lemma~\ref{lem:kernel_bds} that
\begin{align*}
\prob(\matr{A}(t,v)\boldx_k=\boldzero | \mathcal{E}^1_{\delta}) &\leq \Bigl(\frac{1+n^{-\beta/8}}{2}\Bigr)^{(1-\delta)(1+p)n} \\ 
&\leq 2^{-(1-\delta)(1+p)n}\exp \bigl( (1-\delta)(1+p)n^{1-(\beta/8)} \bigr) \\
&= 2^{-(1-\delta)(1+p)n}(1+o(1)),
\end{align*}
since $\beta>8$ by assumption.
Hence,
\begin{equation} \label{eq:sumbound_bulk}
\begin{aligned}
    \sum_{k=\lceil n/8 \rceil}^{n} \binom{n}{k} \prob(\matr{A}(t,v) \boldx_k=\boldzero | \mathcal{E}^1_{\delta}) 
    &\leq (1+o(1))2^{-[(1-\delta)(1+p)-1]n} \sum_{k=0}^n \binom{n}{k} 2^{-n} \\ 
    &= o\Bigl( \frac{1}{n} \Bigr),
\end{aligned}
\end{equation}
since $\sum_{k=0}^n \binom{n}{k}2^{-n}=1$, and $(1-\delta)(1+p)>1$.

Substituting equations (\ref{eq:sumbound_to_alpha})
(\ref{eq:sumbound_to_theta}), (\ref{eq:sumbound_to_eighth}) and (\ref{eq:sumbound_bulk}) in (\ref{eq:completion_prob_bd}), we get
$$
\prob \Bigl( T^{\rm all}_n(RLNC(\beta)) > \lceil \frac{1}{p} \rceil+2 \Bigm| \mathcal{E}^1_{\delta} \Bigr) \leq \sum_{v\in V} o\Bigl( \frac{1}{n} \Bigr) = o(1).
$$
In addition, $\prob \bigl( (\mathcal{E}^1_\delta)^c \bigr) = o(1)$ by Lemma~\ref{lem:degree_conc}. Combining these bounds with the inequality, 
\begin{align*}
&\prob \Bigl( T^{\rm all}_n(RLNC(\beta)) > \lceil \frac{1}{p} \rceil+2 \Bigr) \\
&\leq \prob \Bigl( T^{\rm all}_n(RLNC(\beta)) > \lceil \frac{1}{p} \rceil+2 \Bigm| \mathcal{E}^1_{\delta} \Bigr) + 
\prob \Bigl( \bigl( \mathcal{E}^1_{\delta} \bigr)^c \Bigr),
\end{align*}
yields the claim of the theorem.
\end{proof}

\section{Simulation results} \label{sec:sim}
The analytical results obtained in the last two sections are of an asymptotic character. Moreover, they only pertain to static networks. To complement these, we present Monte Carlo simulations for a range of network sizes, both in static and time-varying networks. In addition, we use simulations to explore the gaps between results that we can prove and which we conjecture, as described in more detail below. The simulations are written in CUDA C. Each result depicted in the figures below is based on 10,000 simulation runs, summarised in a box-and-whisker plot. The box shows the median and upper and lower quartiles from the 10,000 runs, while the whiskers show the full range of data observed.

The simulations of static networks demonstrate that the asymptotic results capture the performance of networks of a moderate size (100+ nodes), but may break down for small networks. We see that in the time-varying network models which we simulated, the algorithms perform significantly better than the theoretical bounds, lending support to our conjecture that static networks are the worst case. We now discuss the simulation results in detail. 

\begin{figure}[htbp]
	\centering
	\includegraphics[width=0.58\textwidth]{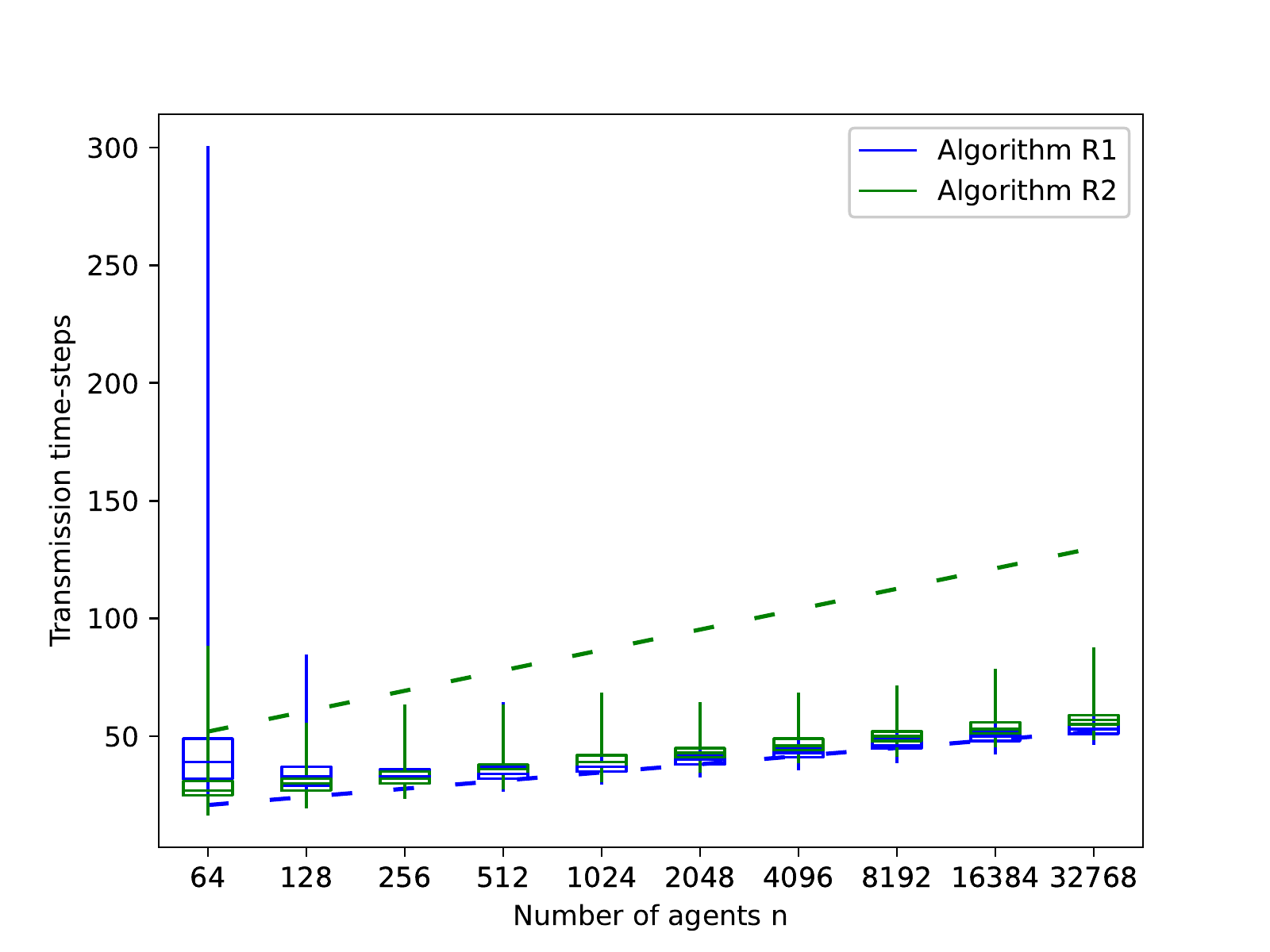}
	\caption{Number of rounds required for allcast by Algorithms R1 and R2 from
	Section~\ref{sec:fwd} on \ER random graphs $G(n,p)$. Plotted against $n$ on a semilog scale,
	for fixed  $p=0.4$. Box plots based on 10,000 replicates each. Dashed lines correspond to
	theoretical bounds of $2\log n/p$ (lower line) and $2\log n/p^2$ (upper line) from
	Theorem~\ref{thm:baseline}.
	}
	\label{fig:baseline_sims}
\end{figure}

\subsection{Static networks}

Figure~\ref{fig:baseline_sims} shows the performance of the two random relaying algorithms,
Algorithm R1 and R2, described in Section~\ref{sec:fwd}. The algorithms were simulated on 10,000
independent realisations of \ER random graphs $G(n,p)$, with $p=0.4$ and different values of $n$.
For each value of $n$ and $p$, the box plots show the median and upper and lower quartile of the
number of rounds needed for all nodes to receive all messages; the whiskers show the full range of
values seen in the simulations. The dashed lines in the figure depict the theoretical upper bounds
of $\frac{2\log n}{p}$ and $\frac{2\log n}{p^2}$ from Theorem~\ref{thm:baseline} on the number of
rounds required for allcast. As $n$ is displayed on a logarithmic scale on the $x$-axis, these
bounds appear as straight lines in the figure.

We observe from the figure that the box plots corresponding to Algorithms R1 and R2 sit virtually on
top of each other and are hard to distinguish. Thus, the simulations show that Algorithm R2 performs
as well as Algorithm R1, even though the bound we were able to prove in Theorem~\ref{thm:baseline}
on the number of rounds required was much weaker for Algorithm R2 than for Algorithm R1. The
simulations suggest that the number of rounds required by both algorithms are close to $2\log n/p$.
Finally, we notice that the whiskers are very wide for $n=64$, the smallest value of $n$ for which
results are depicted. This reflects that the asymptotic analysis becomes poorer for small values of
$n$. This is driven by variability in the \ER random graphs, which may contain nodes with very low
degrees (or even isolated nodes if we simulate enough instances for small enough $n$); these are
bottlenecks for allcast.

\begin{figure}[htbp]
	\centering 
	\includegraphics[width=0.58\textwidth]{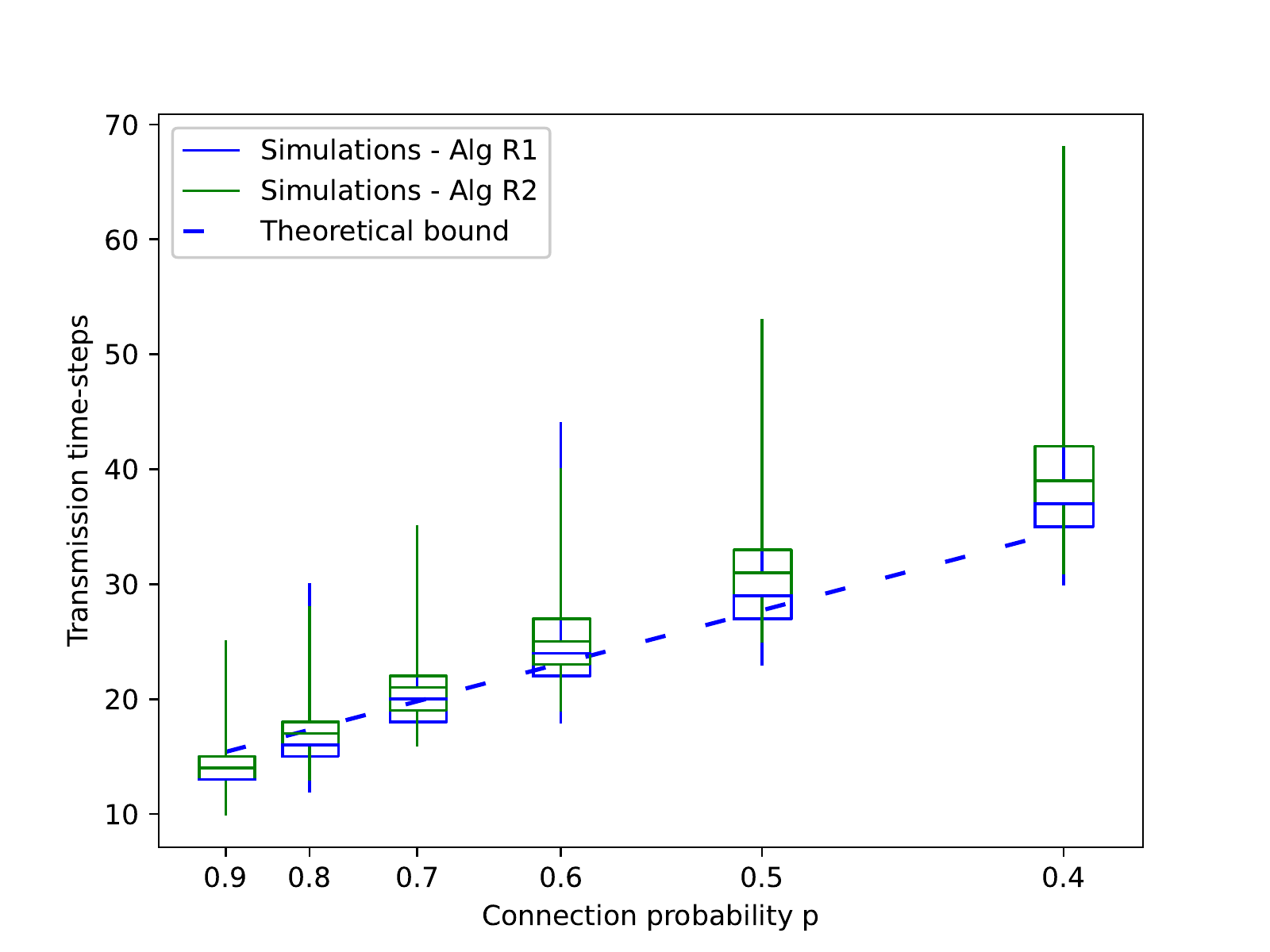} 
	\caption{Number of rounds required for allcast by Algorithms~R1 and~R2 from
	Section~\ref{sec:fwd} on \ER random graphs $G(n,p)$. Plotted against $1/p$ for fixed
	$n=1024$. Box plots based on 10,000 replicates each. The dashed line corresponds to the
	theoretical bound of $2\log n/p$ from Theorem~\ref{thm:baseline}.
	}
	\label{fig:baseline_sims2}
\end{figure}

While Figure~\ref{fig:baseline_sims} showed how the number of rounds required for allcast depends on $n$ for fixed $p$, we show how it depends on $p$ for fixed $n$ in  Figure~\ref{fig:baseline_sims2}. We fix $n=1024$ and run Algorithms R1 and R2 on 10,000 independent realisations of \ER random graphs $G(n,p)$ for different values of $p$. We plot the number of rounds required for allcast against $1/p$; the theoretical bound $2\log n/p$ from Theorem~\ref{thm:baseline} is the dashed straight line in the figure. It is clear from the figure that the relation between the number of rounds and $1/p$ is linear for both Algorithm R1 and R2, and that the number of rounds is very close to the theoretical bound of $2\log n/p$. We have not plotted the theoretical bound of $2\log n/p^2$ for Algorithm R2 as it is clear that this bound is not tight and does not capture the true performance of this algorithm. Finally, we observe that the number of rounds required becomes more variable as $p$ decreases. This shows that stochastic fluctuations become more important as the graph becomes sparser.

\begin{figure}[htbp]
	\centering
	\includegraphics[width=0.6\textwidth]{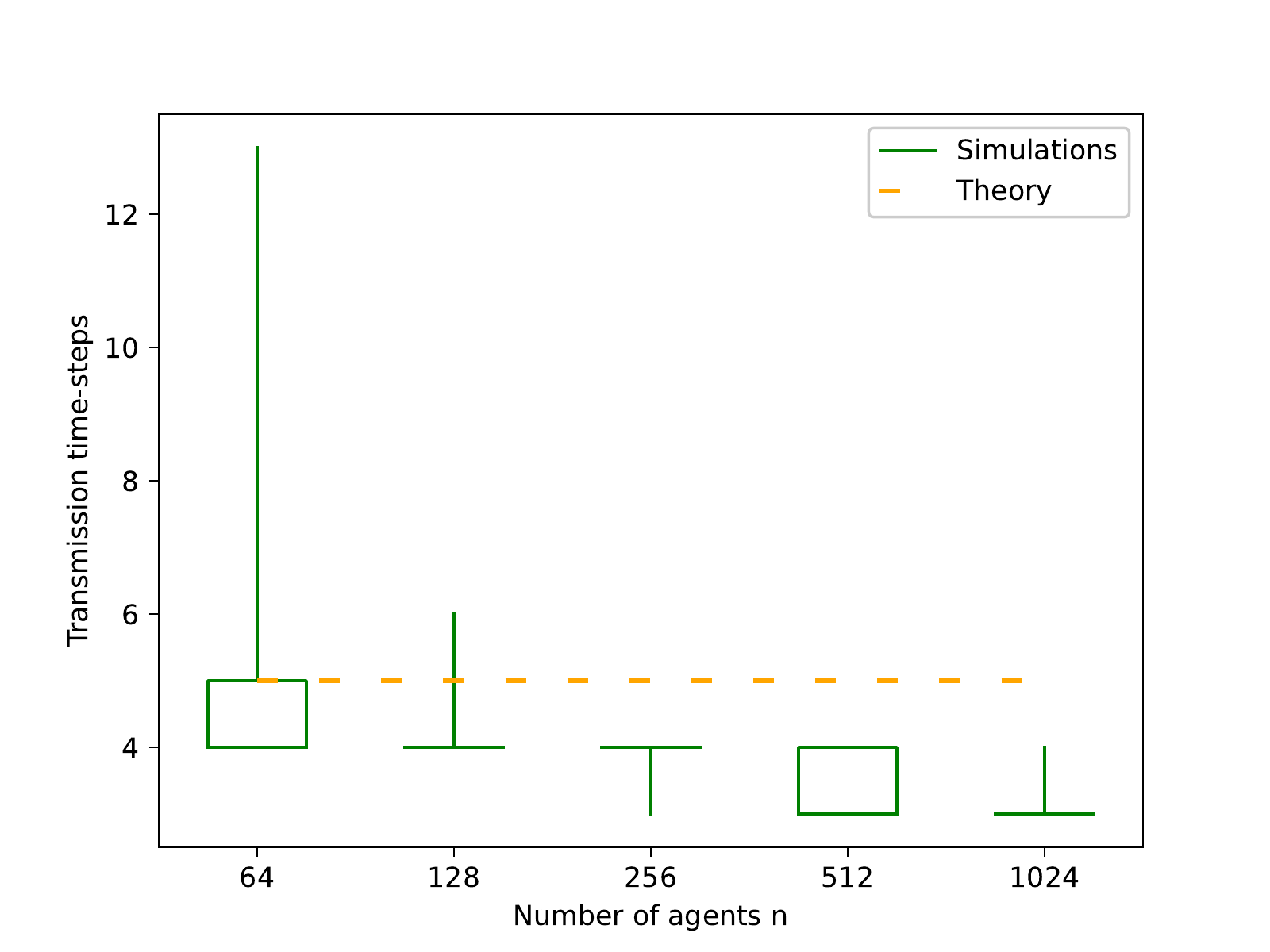}
	\caption{Number of transmission rounds required by Algorithm RLNC(8), on \ER random graphs
	$G(n,p)$ with $p=0.4$ and varying $n$. Box plots based on 10,000 replicates. The dashed
	horizontal line is the theoretical bound $\lceil 1/p \rceil+2 = 5$ from
	Theorem~\ref{thm:coding}.}
	\label{fig:trans_sims}
\end{figure}

We now turn to the performance of Algorithm $RLNC(\beta)$ from Section~\ref{sec:coding}, which
employs network coding.  Figure~\ref{fig:trans_sims} shows the number of rounds required to achieve
allcast on \ER random graphs $G(n,p)$ with $p=0.4$ and different values of $n$, when using
$RLNC(\beta)$ with $\beta=8$. The dashed horizontal line is the theoretical bound of $\lceil 1/p
\rceil+2 = 5$ rounds from Theorem~\ref{thm:coding}. The box plots, based on 10,000 replicates, are
close to the theoretical bound, and also show low variability. This demonstrates that not only does
network coding achieve much lower delay than forwarding on average, it is also more reliable.  The
relation between the number of rounds required and the edge probability $p$ of the random graphs is
depicted in Figure~\ref{fig:trans_sims2}, fixing the number of nodes at $n=256$ and the RLNC
sparsity parameter at $\beta=8$. The box plots are based on 10,000 replicates and the dashed line is
the theoretical bound of $\lceil 1/p \rceil+2$. The plots again show that the vast majority of
simulations are close to the theoretical bound, and the variability exhibited in the remainder is
not large. In fact, we observe in some of the plots that the median and upper and lower quartiles
all coincide, demonstrating how sharply the distribution is concentrated.

\begin{figure}[htbp]
	\centering
	\includegraphics[width=0.6\textwidth]{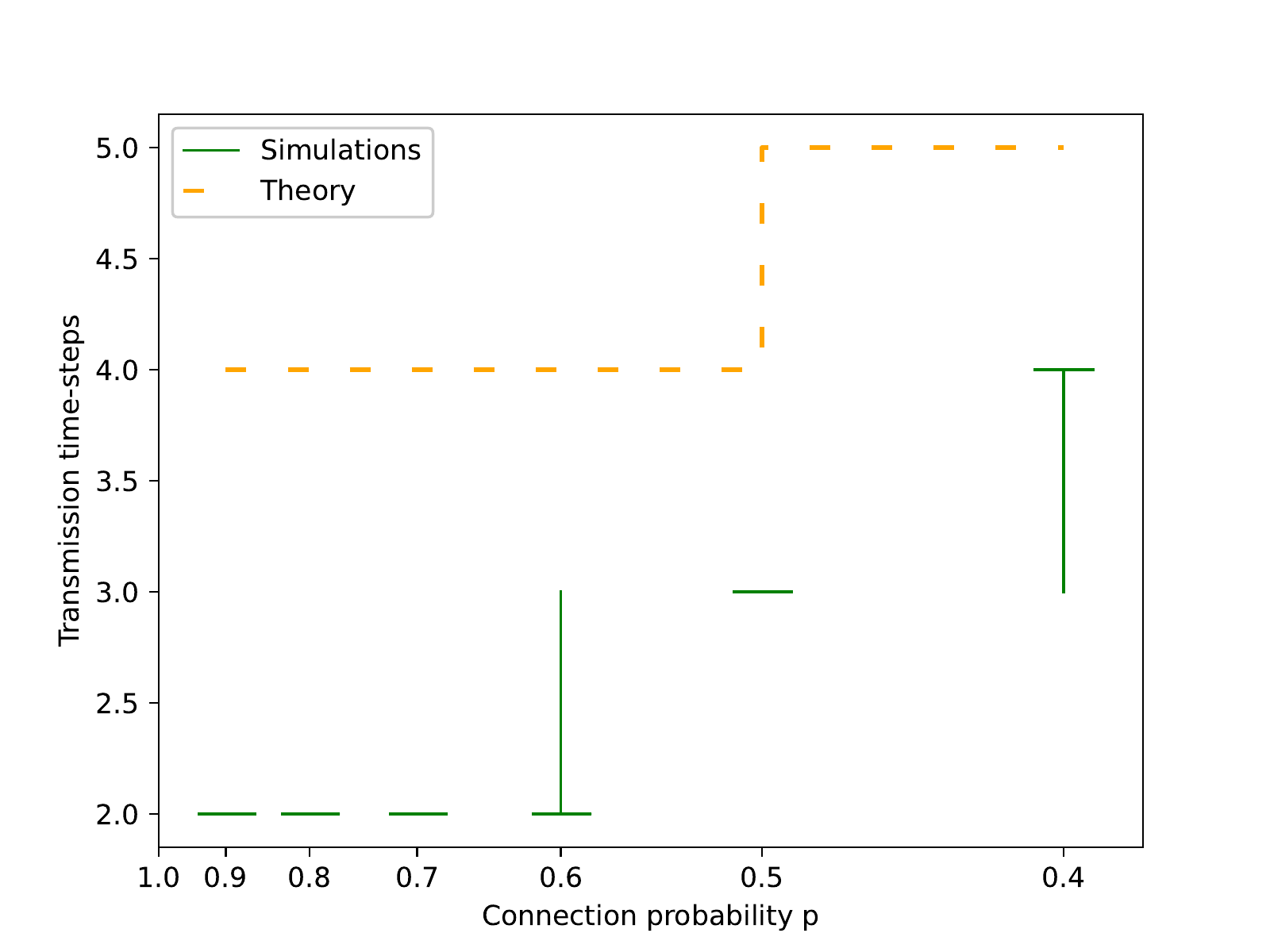}
	\caption{Number of transmission rounds required by Algorithm RLNC(8), on \ER random graphs
	$G(n,p)$ with $n=256$ and varying $p$. Box plots based on 10,000 replicates. The dashed line
	is the theoretical bound $\lceil 1/p \rceil+2 $ from Theorem~\ref{thm:coding}.}
	\label{fig:trans_sims2}
\end{figure}

Next, we investigate the affect of the sparsity parameter $\beta$ of the network code on the
algorithm's performance. Theorem~\ref{thm:coding} suggests that this constant should be taken to be
greater than $8$. However, our simulations show this to be unduly conservative, and that the
algorithm works for much smaller $\beta$. In Figure~\ref{fig:beta}, we compare the performance of
$RLNC(\beta)$ for different values of $\beta$, across a wide range of network sizes, $n$. The plots
show that, while decreasing $\beta$ worsens performance, the impact is small; decreasing $\beta$ from 8 to 4 has a negligible effect, and going from $\beta=4$ to $\beta=2$ typically increases the number of rounds required for allcast by 1 or 2.  The impact is
more pronounced in smaller networks, and manifests mainly in greatly increased variability in the
tail, but only a small increase in the typical number of round required. If we further decrease
$\beta$ to 1, then there is a marked increase in the number of rounds typically required; in small
networks, performance is severely degraded. This is consistent with the results
in~\cite{apr19blomer97}, which show that $\beta=1$ is the critical density of non-zero entries per
row required for random matrices over finite fields to have full rank. In summary, the simulations
suggest that, even though $\beta>8$ is assumed in Theorem~\ref{thm:coding}, we obtain performance
close to the bound in this theorem for all $\beta \geq 2$. On sufficiently large networks, even
$\beta \geq 1$ might be adequate.

\begin{figure}[htb!]
	\begin{subfigure}{0.5\textwidth}
		\centering
		\includegraphics[width=0.8\textwidth]{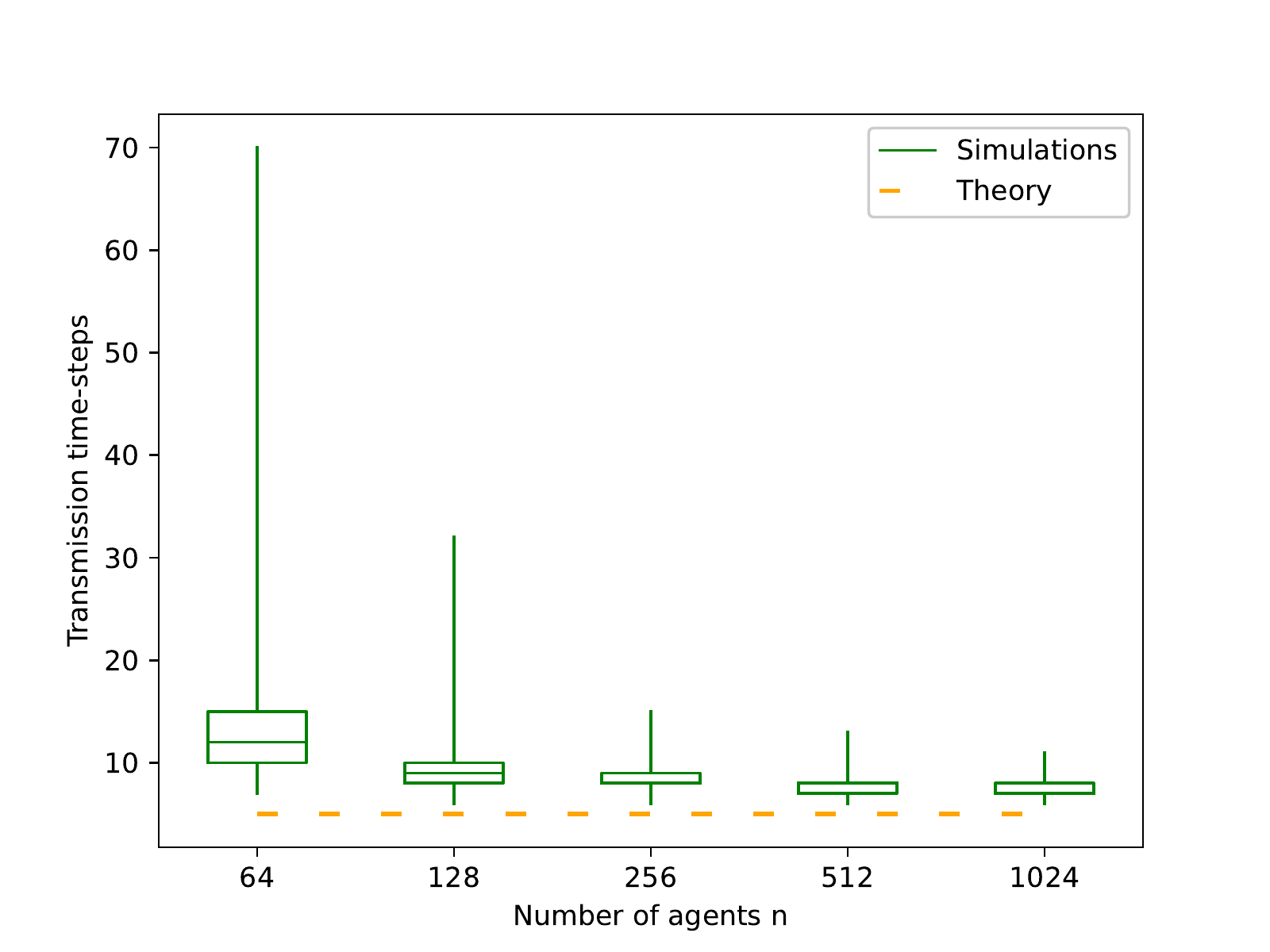}
		\caption{$\beta = 1$}
	\end{subfigure}
	\begin{subfigure}{0.5\textwidth}
		\centering
		\includegraphics[width=0.8\textwidth]{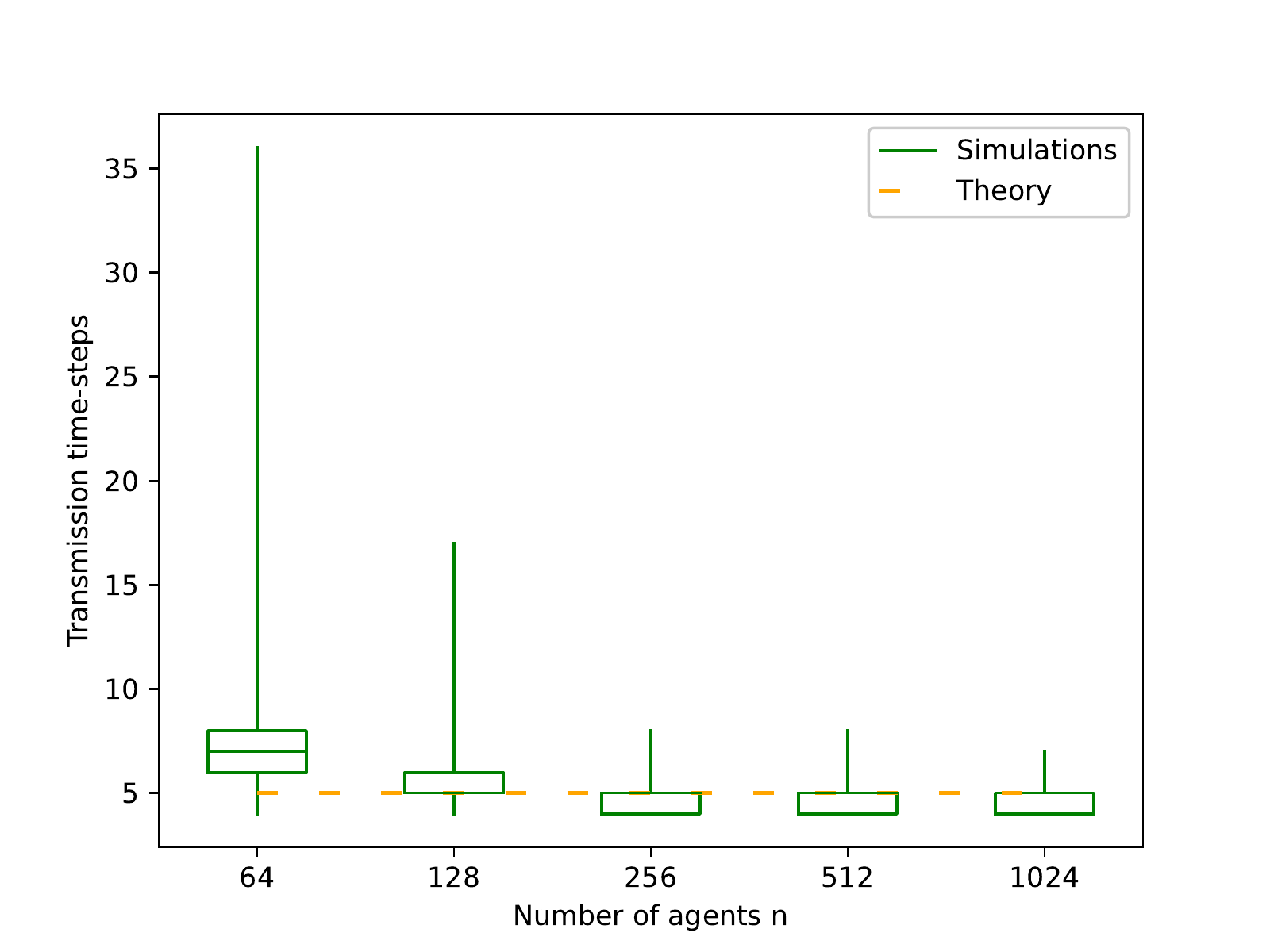}
		\caption{$\beta = 2$}
	\end{subfigure}
	\begin{subfigure}{0.5\textwidth}
		\centering
		\includegraphics[width=0.8\textwidth]{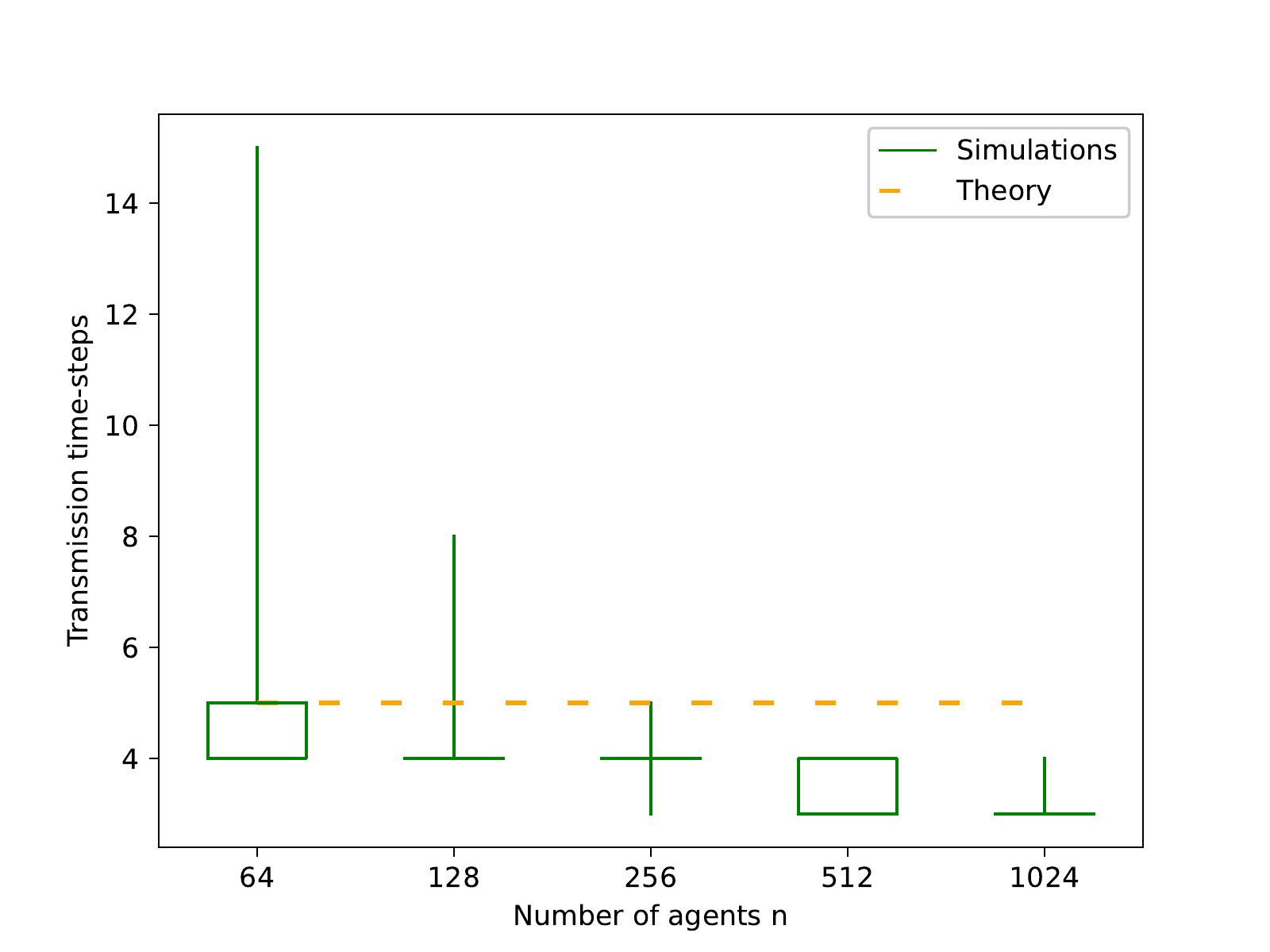}
		\caption{$\beta = 4$}
	\end{subfigure}
	\caption{Number of transmission rounds required by Algorithm $RLNC(\beta)$ to achieve allcast, plotted as a function of network size $n$, for $\beta=1,2,4$. Each boxplot is based on 10,000 \ER random graphs $G(n,p)$, with $p=0.4$. Dashed lines show the bound from Theorem~\ref{thm:coding}.}
	\label{fig:beta}
\end{figure}
\begin{figure}[p]
	\centering
	\includegraphics[width=0.6\textwidth]{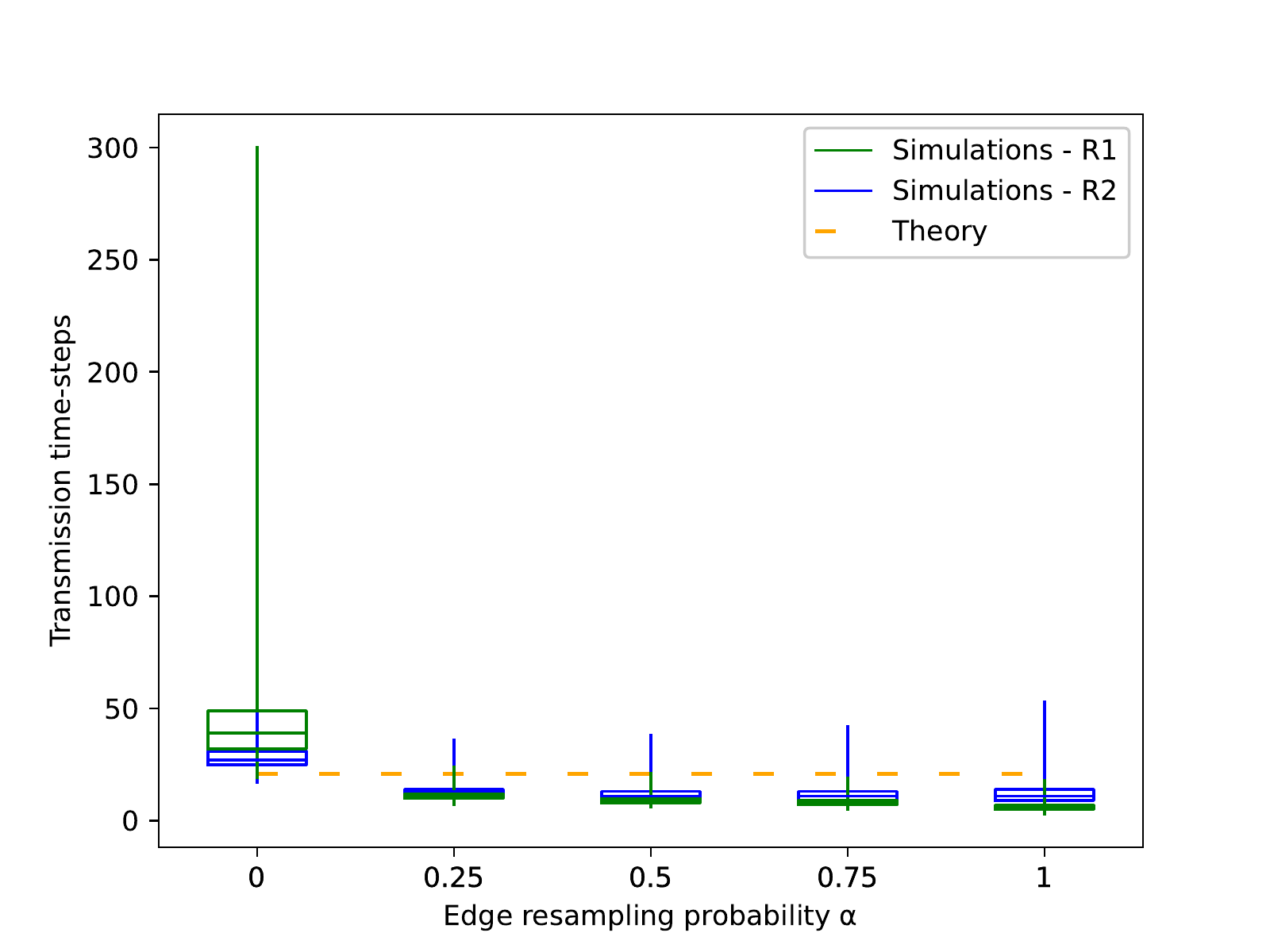}
	\caption{Number of rounds required by Algorithms~R1 and~R2 for allcast on
	evolving networks, plotted against the edge resampling probability, $\alpha$. Networks have
	$n=64$ nodes and edge probability $p=0.4$. Boxplots based on 10,000 replicates. The dashed
	line depicts the asymptotic bound $\frac{2}{p} \log(n) = 20.8$ rounds for static networks
	from Theorem~\ref{thm:baseline}.}
	\label{fig:alpha_baseline} 	
\end{figure}
\begin{figure}[p]
	\centering
	\includegraphics[width=0.6\textwidth]{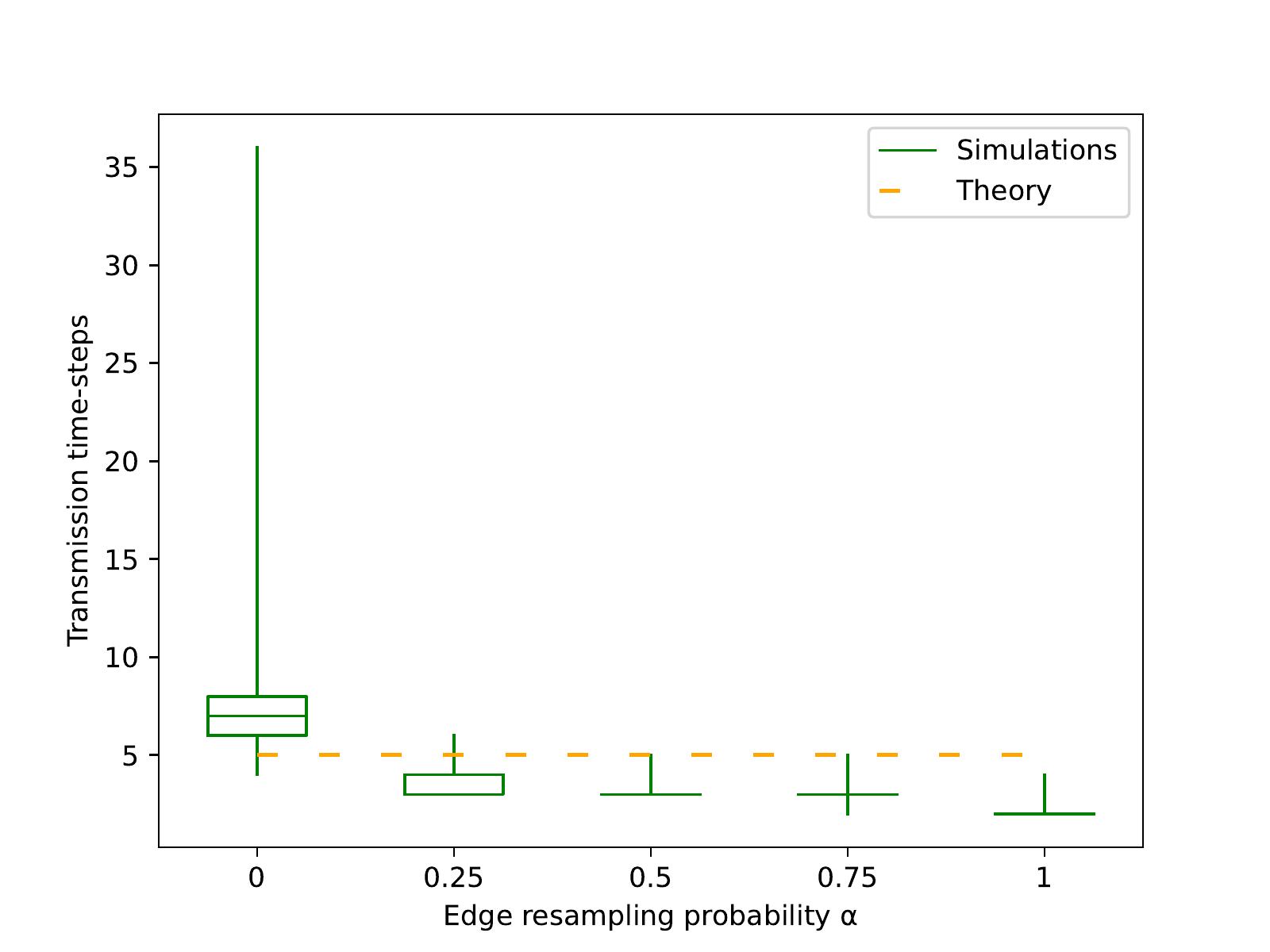}
	\caption{Number of rounds required by Algorithm RLNC(2) for allcast on evolving networks,
	plotted against the edge resampling probability, $\alpha$. Networks have $n=64$ nodes and
	edge probability $p=0.4$. Boxplots based on 10,000 replicates. The dashed line
	depicts the asymptotic bound $\lceil 1/p\rceil+2=5$ rounds for static networks from
	Theorem~\ref{thm:coding}.} \label{fig:alpha}
\end{figure}

\subsection{Time-varying networks}
The theoretical analysis presented in this paper is for static networks. This was justified on
grounds of timescale separation, i.e., the timescale on which the network topology changes is far
slower than the latencies acceptable for allcast. Nevertheless, there may be applications or
environments in which this assumption is violated, and so we explore the performance of our
algorithms in settings which relax this assumption.

We consider a network model in which the edges evolve as independent On-Off Markov chains. In each
time step $t$, and for each $(u,v)\in V\times V$, we retain the state of this ordered pair with
probability $1-\alpha$, where $\alpha\in [0,1]$ is a parameter of the model. Thus, with probability
$1-\alpha$, we take $(u,v)\in E_t$ if and only if $(u,v)\in E_{t-1}$. With the residual probability
$\alpha$, we resample the edge from a Bernoulli($p$) distribution, independent of everything else.
We could equivalently describe the Markov chain by stating that an edge changes from Off to On with
probability $\alpha p$ and from On to Off with probability $\alpha(1-p)$ in each time step. It is
easy to see that the stationary distribution for each edge is to be present with probability $p$,
and we initialise the edges independently with this distribution to obtain a stationary Markov
chain. Observe that we recover static networks if we take $\alpha=0$; if $\alpha=1$, the whole
network is resampled at each time step and there is no temporal correlation. 

Figures~\ref{fig:alpha_baseline}
and~\ref{fig:alpha} depict the results for Algorithms~R1, R2 and $RLNC(\beta)$, with $\beta=2$,
employed on time-varying networks with $n=64$ nodes and steady-state edge probability $p=0.4$. We
have plotted the number of rounds required for allcast by each algorithm against the parameter
$\alpha$, which measures how quickly the network changes over time.  Each boxplot is based on 10,000
replications, with each replication run until allcast completes over the evolving network.  We see
from the figures that for all algorithms, as $\alpha$ increases, the time required for allcast
decreases.  This provides some support for our conjecture that static networks are the worst case.

\section{Conclusion} \label{sec:concl}

The goal of this work was to develop lightweight distributed low-latency algorithms for the allcast problem, of disseminating one packet from each node in a connected communication graph to all other nodes. We presented two classes of algorithms, one based on random relaying and another based on random linear network coding. We considered a slotted time model in which each node broadcasts once in each time slot, and broadcasts are received error-free by its neighbours in the communication graph. We evaluated the performance of our algorithms on directed \ER random graphs $G(n,p)$, where $n$ denotes the number of nodes and $p$ the edge probability.  We presented analytical results in an asymptotic regime where $n$ tends to infinity with $p$ fixed; we showed that the time complexity (number of time slots required for allcast) of random relaying grows logarithmically in the number of nodes, whereas the time complexity of random coding remains bounded, irrespective of the number of nodes. The analytical results were complemented by Monte Carlo simulations for a wide range of system sizes. These showed that the asymptotic results provide a good approximation to the observed performance in systems of moderate size, ranging from 128 nodes upwards. They also showed that random coding substantially outperforms random relaying over the whole range of simulated parameters, by between a factor of 4 to 10. Finally, while our analytical results only pertain to static networks, the simulations showed that the proposed algorithms are robust to network dynamics; in fact, their performance improves if networks evolve over time.

A major limitation of the analytical results presented in this paper is that they only pertain to two-hop networks. The algorithms can be easily extended to general networks, but the analysis is greatly complicated by the dependencies induced. A new approach is needed to extend the analysis to general networks. Secondly, we did not study the numerical complexity of decoding the random linear codes proposed, or seek to develop efficient algorithms for this purpose; for example, we did not evaluate the convergence of belief propagation algorithms for our coding scheme. These are open problems for future research.

{\bf Acknowledgments} Work carried out while MAG was with the School of Mathematics and the CDT in
Communications, University of Bristol, United Kingdom. The authors would like to thank Steve Wales,
Roke Manor Research, for useful discussions.

\FloatBarrier

\bibliographystyle{IEEEtran} 
\bibliography{phd}
\end{document}